\def\BibTeX{{\rm B\kern-.05em{\sc i\kern-.025em b}\kern-.08em
    T\kern-.1667em\lower.7ex\hbox{E}\kern-.125emX}}
\theoremstyle{plain}
\newtheorem{theorem}{Theorem}
\newtheorem{lemma}{Lemma}
\newtheorem{corollary}{Corollary}
\newtheorem{proposition}{Proposition}
\theoremstyle{definition}
\newtheorem{definition}{Definition}
\newtheorem{example}{Example}
\newtheorem{assumption}{Assumption}
\newtheorem{remark}{Remark}
\newtheorem{fact}{Fact}
\theoremstyle{remark}
\begin{document}
\newcommand{\Real}{\mathbb{R}}
\newcommand{\Natural}{\mathbb{N}}
\newcommand{\Bdd}{\mathcal{B}}
\newcommand{\Xsp}{\mathcal{X}}
\newcommand{\Ysp}{\mathcal{Y}}
\newcommand{\Zsp}{\mathcal{Z}}
\newcommand{\Pcal}{\mathcal{P}}

\newcommand{\Am}{\bm{A}}
\newcommand{\Bm}{\bm{B}}
\newcommand{\Dm}{\bm{D}}
\newcommand{\Em}{\bm{E}}
\newcommand{\Lm}{\bm{L}}
\newcommand{\Mm}{\bm{M}}
\newcommand{\Xm}{\bm{X}}
\newcommand{\Ym}{\bm{Y}}
\newcommand{\Zm}{\bm{Z}}
\newcommand{\Nm}{\bm{N}}
\newcommand{\Xim}{\bm{\Xi}}
\newcommand{\Imat}{\bm{I}}

\newcommand{\av}{\bm{a}}
\newcommand{\bv}{\bm{b}}
\newcommand{\ev}{\bm{e}}
\newcommand{\xv}{\bm{x}}
\newcommand{\sv}{\bm{s}}
\newcommand{\uv}{\bm{u}}
\newcommand{\vv}{\bm{v}}
\newcommand{\zv}{\bm{z}}
\newcommand{\yv}{\bm{y}}
\newcommand{\parentheses}[1]{\left( #1 \right )}
\newcommand{\brackets}[1]{\left[ #1 \right]}
\newcommand{\braces}[1]{\left\{ #1 \right\}}
\newcommand{\norm}[1]{\left\| #1 \right\|}
\newcommand{\argmin}{\mathrm{arg\,min}}
\newcommand{\innerp}[2]{\langle #1, #2 \rangle}
\newcommand{\lev}[1]{\textrm{lev}_{\leq #1}}
\newcommand{\dom}{\textrm{dom\,}}
\newcommand{\cont}{\textrm{cont\,}}
\newcommand{\interior}{\textrm{int\,}}
\newcommand{\txtred}[1]{\textcolor{red}{#1}}
\newcommand{\txtblue}[1]{\textcolor{blue}{#1}}
\newcommand{\Prox}{\mathrm{Prox}}
\newcommand{\Conv}{\textrm{Conv}}
\newcommand{\dist}{\textrm{dist}}
\newcommand{\levelset}{\mathrm{lev}}
\newcommand{\blkdiag}{\textrm{blkdiag}}
\newcommand{\vectorize}{\textrm{vec}}
\newcommand{\trace}{\textrm{tr}}

\title{A Unified Framework for Solving a General Class of Nonconvexly Regularized Convex Models}
\author{Yi Zhang and Isao Yamada, \IEEEmembership{Fellow, IEEE}
\thanks{Manuscript created January, 2023; This work was supported in part by JSPS Grants-in-Aid (21J22393), JSPS Grants-in-Aid (19H04134) and by JST SICORP (JPMJSC20C6).}
\thanks{The authors are with the Department of Information and Communications Engineering, Tokyo Institute of Technology, Meguro-ku, Tokyo 152-8550, Japan (email: yizhang@sp.ict.e.titech.ac.jp; isao@sp.ict.e.titech.ac.jp)}}


\maketitle

\begin{abstract}
Recently, several nonconvex sparse regularizers which can preserve the convexity of the cost function have received increasing attention. This paper proposes a general class of such convexity-preserving (CP) regularizers, termed partially smoothed difference-of-convex (pSDC) regularizer. The pSDC regularizer is formulated as a structured difference-of-convex (DC) function, where the landscape of the subtrahend function can be adjusted by a parameterized smoothing function so as to attain overall-convexity. Assigned with proper building blocks, the pSDC regularizer reproduces existing CP regularizers and opens the way to a large number of promising new ones. 

With respect to the resultant nonconvexly regularized convex (NRC) model, we derive a series of overall-convexity conditions which naturally embrace the conditions in previous works. Moreover, we develop a unified framework based on DC programming for solving the NRC model. Compared to previously reported proximal splitting type approaches, the proposed framework makes less stringent assumptions. We establish the convergence of the proposed framework to a global minimizer. Numerical experiments demonstrate the power of the pSDC regularizers and the efficiency of the proposed DC algorithm.
\end{abstract}

\begin{IEEEkeywords}
sparse recovery, nonconvex penalties, convexity-preserving regularizers, nonconvexly regularized convex models, DC programming
\end{IEEEkeywords}

\section{Introduction}
\IEEEPARstart{T}{he} last two decades have witnessed a flourish of studies in sparsity-aware processing \cite{candes2006,donoho2006,bach2012,eldar2012}. A prominent approach to estimate sparse signals is the variational method which pursues a minimizer of the following cost function: 
\begin{equation}\label{eq:sparse_recovery}
\underset{\xv\in\Real^n}{\text{minimize}}\;\;J(\xv)\coloneqq F(\xv)+\lambda \Psi(\xv),
\end{equation}
where $F:\Real^n\to\Real$ is the data fidelity term, $\lambda>0$ is a tuning parameter, and $\Psi:\Real^n\to\Real\cup\{+\infty\}$ is a regularizer that promotes sparseness of the solution.

The ideal choice for $\Psi$ is the $l_0$ pseudo-norm, i.e., the number of nonzero components in the input vector. However, this leads to a discontinuous nonconvex optimization problem which is known to be NP-hard \cite{natarajan1995}. Accordingly, one usually resorts to continuous approximations of $l_0$ pseudo-norm to circumvent this difficulty. Conventional studies usually have adopted convex regularizers (e.g., $l_1$-norm \cite{tibshirani1996}, Huber function \cite{fessler1997}), which ensure efficient and reliable solution of (\ref{eq:sparse_recovery}). Nevertheless, since most convex regularizers are coercive (i.e., $\Psi(\xv)$ goes to $+\infty$ if $\norm{\xv}_2$ goes to $+\infty$), they usually overpenalize the $i$th component $x_i$ when $|x_i|$ is large, which causes underestimation of the true solution \cite{zhang2010}. To overcome this problem, continuous nonconvex regularizers (e.g., SCAD \cite{fan2001}, MCP \cite{zhang2010}) have been proposed to accomplish less biased estimation. Although they yield much more tractable nonconvex programs than the original $l_0$ pseudo-norm, when adopted in application, existing algorithms may get stuck in local minima, which poses a concern on the reliability of nonconvex models.

To resolve this dilemma, particular convexity-preserving (CP) regularizers have been proposed to achieve debiased convex sparse regularization (see Sec. \ref{sec:related_works} for details). The so-called CP regularizer is a special parameterized regularizer. Although the CP regularizer itself is nonconvex, its shape can be adjusted by certain tuning parameter so as to induce the overall-convexity of the cost function\footnote{More precisely, given any data fidelity term $F$ and weight parameter $\lambda$ satisfying suitable conditions, one can always find some proper tuning parameter value so that the resultant cost function is convex.}. Therefore, CP regularizers enjoy both improved estimation accuracy and reliability of solution. Due to such favourable properties, there is an increasing need to design CP regularizers for more general regularization problems than sparse regularization, and to develop unified solution methods for the resultant nonconvexly regularized convex (NRC) models, which constitutes the motivation of the current study.

Earlier CP regularizers are mostly case-specifically designed for the sparse least-squares problem \cite{selesnick2017}, the low rank matrix recovery problem \cite{parekh2016} or particular variants of them \cite{selesnick2020,suzuki2021}. To further exploit the power of CP regularizers, more general formulations of CP regularizers \cite{abe2020,alshabili2021} have been proposed, and solution algorithms \cite{yata2021} for NRC models with split feasibility type constraints \cite{censor2005} have been developed. However, the existing studies have the following limitations:
\begin{enumerate}
\item Previous CP regularizers \cite{selesnick2017,abe2020,alshabili2021} mainly consider quadratic data fidelity terms, which do not cover general observation systems such as computed tomography \cite{bouman1996}.

\item Existing proximal splitting type algorithms  \cite{selesnick2017,abe2020,alshabili2021,yata2021} for NRC models rely heavily on the proximability\footnote{See Sec. \ref{subsec:notation} for the definition of "proximable" functions.} of the component functions, which may fail in practice.

\item Existing proximal splitting type algorithms \cite{selesnick2017,abe2020,alshabili2021,yata2021} for NRC models deal with multiple regularizers \cite[Example 3]{abe2020} and convex constraint \cite{yata2021} by lifting the problem to a higher-dimensional space.
\end{enumerate}

The current paper is devoted to generalizing prior arts as well as tackling the difficulties above. Our major contributions can be summarized as follows:
\begin{enumerate}
\item We propose a general class of CP regularizers termed partially smoothed difference-of-convex (pSDC) regularizer. The pSDC regularizer enjoys remarkable power of representability.  Assigned with proper building block functions, the pSDC regularizer reproduces existing CP regularizers \cite{selesnick2017,abe2020,alshabili2021} and opens the way to a large number of promising new ones. 

\item We study an abstract NRC model which takes general convex data fidelity terms, general convex constraint and multiple regularizers into account. With respect to this NRC model, we derive a series of overall-convexity conditions which naturally embraces the conditions proposed in previous works \cite{parekh2016,selesnick2017,abe2020,alshabili2021,suzuki2021}.

\item We develop a unified framework for solving the proposed NRC model. This DC programming \cite{le2018} based framework does not assume component functions to be proximable, thus is less stringent. Moreover, it does not necessarily enlarge the dimension of the problem when additional regularizers and constraints are involved.
\end{enumerate}

The remainder of this paper is organized as follows. Section \ref{sec:preliminaries} introduces the necessary mathematical preliminaries and provides a brief introduction on the conventional studies of CP regularizers. In Section \ref{sec:pSDC_regularizer}, we formally propose the pSDC regularizer, and demonstrate its powerful representability by concrete examples. In Section \ref{sec:overall_convexity_conditions}, we study an abstract NRC model which incorporates multiple pSDC regularizers along with general convex data fidelity terms and general convex constraint, and we present a series of overall-convexity conditions with respect to this NRC model. In Section \ref{sec:dc_algorithm}, we develop a DC type solution algorithm for the proposed NRC model and establish its convergence to a global minimizer. Section \ref{sec:numerical_experiments} provides results of numerical experiments, followed by conclusion in Section \ref{sec:conclusion}. A preliminary short version of this paper was presented at a conference \cite{zhang2022}.

\section{Preliminaries}\label{sec:preliminaries}
\subsection{Mathematical Preliminaries}\label{subsec:notation}
Let $\Natural,\Real,\Real_+$ be the sets of nonnegative integers, real numbers and positive real numbers. For $n$-dimensional Euclidean space $\Real^n$, $\innerp{\cdot}{\cdot}$ and $\norm{\cdot}_{p}\;(p\geq 1)$ denote respectively the standard inner product and the $l_p$-norm in $\Real^n$. $\mathbf{0}_n$ stands for the $n\times 1$ zero vector and $\mathbf{O}_{m\times n}$ denotes the $m\times n$ zero matrix. $\Imat_n$ denotes the $n\times n$ identity matrix. $\mathrm{diag}\parentheses{d_1,d_2,\dots,d_n}$ denotes the diagonal matrix with diagonal entries $d_1,d_2,\dots,d_n$. For $\Am\in\Real^{m\times n}$, $\Am^T\in\Real^{n\times m}$ denotes its transpose. For a square matrix $\Am\in\Real^{n\times n}$, $\trace(\Am)$ denotes its trace. For $\Am\in\Real^{m\times n}$, $\norm{\Am}_F$ and $\norm{\Am}_2$ respectively denote its Frobenius norm and spectral norm, and $\norm{\Am}_{2,1}\coloneqq \sum_{j=1}^n\norm{\av_j}_2$ is the $l_{2,1}$-norm, where $\av_j$ is the $j$th column vector of $\Am$.

For $f:\Real^n\to[-\infty,+\infty]$, the lower level set of $f$ at height $\xi\in\Real$ is defined as $\levelset_{\leq \xi}f\coloneqq\braces{\xv\in\Real^n\mid f(\xv)\leq \xi}$, and the domain of $f$ is $\dom{f}\coloneqq\braces{\xv\in\Real^n\mid f(\xv)<+\infty}$. For a differentiable function $f:\Real^{n}\to\Real$, $\nabla f(\xv)$ denotes its gradient at $\xv\in\Real^n$. For a twice differentiable function $f:\Real^n\to\Real$, $\nabla^2 f(\xv)\in\Real^{n\times n}$ denotes its Hessian matrix at $\xv\in\Real^n$. For a convex function $f:\Real^n\to\Real\cup\braces{+\infty}$, the subdifferential of $f$ at $\xv\in\Real^n$ is
\begin{equation*}
\partial f(\xv)\coloneqq\{\uv\in\Real^n\mid(\forall \zv\in\Real^n)\;\innerp{\zv-\xv}{\uv}+f(\xv)\leq f(\zv)\},
\end{equation*}
if $\uv\in\partial f(\xv)\neq\emptyset$, then $\uv$ is called a subgradient of $f$ at $\xv$. We denote $\Gamma_0(\Real^n)$ as the set of all proper lower semicontinuous\footnote{For $f:\Real^N\to\Real\cup\{+\infty\}$, $f$ is referred to as \textit{proper} if $\dom{f}\coloneqq \braces{\xv\in\Real^N\mid f(\xv)<+\infty}\neq \emptyset$, and \textit{lower semicontinuous} if $\mathrm{lev}_{\leq \xi}f$ is closed for every $\xi\in\Real$ \cite[Thm. 2.6]{beck2017}.} convex functions from $\Real^n$ to $\Real\cup\braces{+\infty}$ \cite{bauschke2017}. For $f$ in $\Gamma_0(\Real^n)$, the proximity operator of $f$ is defined as
\begin{equation*}
\Prox_{f}(\xv)=\underset{\zv\in\Real^n}{\argmin}\; \brackets{f(\zv)+\frac{1}{2}\norm{\xv-\zv}_2^2}.
\end{equation*}
We say that $f$ is proximable if $\Prox_{\gamma f}$ can be computed to high precision efficiently for every $\gamma>0$. For mappings $F_1:\Real^n\to\Real^m$ and $F_2:\Real^m\to\Real^p$, $F_2\circ F_1$ denotes their composition. For a nonempty closed convex set $C\subset \Real^n$, the indicator function of $C$ is defined as
\begin{equation*}
\iota_C:\Real^n\to[-\infty,+\infty]:x\mapsto\begin{cases}
0, & \mathrm{if }\;\; x\in C,\\
+\infty, & \mathrm{otherwise.}
\end{cases}
\end{equation*}
For $f,g:\Real^n\to\Real\cup\{+\infty\}$, $(f\Box g)$ stands for the infimal convolution \cite{bauschke2017} of them:
\begin{equation}
(f\Box g)(\xv)\coloneqq \inf_{\zv\in\Real^n} \parentheses{f(\zv)+g(\xv-\zv)},
\end{equation}
and the infimal convolution is \textit{exact at} $\xv\in\Real^n$ if
\begin{equation*}
(\exists \zv\in\Real^n)\;\; (f\Box g)(\xv)=f(\zv)+g(\xv-\zv)\in (-\infty,+\infty];
\end{equation*}
$(f\Box g)$ is \textit{exact} if it is exact at every point of its domain.

In the sequel, we introduce the inf-smoothing technique which play a critical role in the proposed pSDC regularizer.
\begin{definition}\label{def:inf_conv_approximation}
Let $f\in\Gamma_0(\Real^n)$ and let $\omega:\Real^n\to\Real$ be a differentiable convex function with $\nabla\omega$ being Lipschitz continuous with constant $1/\sigma\; (\sigma>0)$. Suppose that $(f\Box \omega)(\xv)$ is finite for every $\xv\in\Real^n$. Then we call $f\Box \omega$ \textit{the inf-conv smooth approximation of $f$ by $\omega$}.
\end{definition}
The definition above is a slightly modified version of \cite[Def. 4.2]{beck2012}. It implies that given a convex function $f$ and a smooth convex function $\omega$, $f\Box\omega$ is a smooth approximation of $f$, as summarized in the following proposition.
\begin{proposition}\label{prop:inf_conv_smooth_approximation}
Let $f\Box\omega$ be the inf-conv smooth approximation of $f$ by $\omega$. Then the following holds:
\begin{enumerate}
\item $f\Box \omega$ is differentiable and with gradient $\nabla (f\Box\omega)$ which is Lipschitz with constant $\frac{1}{\sigma}$,

\item let $\xv\in\Real^n$, and suppose that $\zv_{\xv}$ satisfies that
\begin{equation*}
\zv_{\xv}\in\underset{{\zv\in\Real^n}}{\argmin}\; f(\zv)+\omega(\xv-\zv),
\end{equation*}
then $\nabla (f\Box\omega)(\xv)=\nabla\omega(\xv-\zv_{\xv})$,

\item $(f\Box \omega)\in\Gamma_0(\Real^n)$.
\end{enumerate}
\end{proposition}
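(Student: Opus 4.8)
The plan is to prove the three claims in the order (3), (1), (2), since part (3) is nearly immediate and supplies the biconjugation needed for part (1), while part (1) in turn delivers the differentiability that part (2) exploits. For part (3), I would note that the infimal convolution of two convex functions is convex, and that by hypothesis $f\Box\omega$ is finite (hence real-valued and proper) on all of $\Real^n$. Since a convex function that is finite everywhere on $\Real^n$ is continuous and therefore lower semicontinuous, it follows that $f\Box\omega\in\Gamma_0(\Real^n)$.

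For part (1), I would pass to Fenchel conjugates. The identity $(f\Box\omega)^*=f^*+\omega^*$ holds unconditionally, and since $\nabla\omega$ is Lipschitz with constant $1/\sigma$, the standard smoothness--strong-convexity duality gives that $\omega^*$ is $\sigma$-strongly convex; adding the convex $f^*$ keeps $h\coloneqq f^*+\omega^*$ $\sigma$-strongly convex. Because $f\Box\omega\in\Gamma_0(\Real^n)$ by part (3), its conjugate $h$ is proper, lower semicontinuous and convex, and the Fenchel--Moreau theorem yields $f\Box\omega=(f\Box\omega)^{**}=h^*$. Invoking the dual fact that the conjugate of a proper lsc $\sigma$-strongly convex function is differentiable on $\Real^n$ with $\tfrac{1}{\sigma}$-Lipschitz gradient then identifies $f\Box\omega=h^*$ as exactly such a function, which establishes (1).

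For part (2), I would assume the inner infimum is attained at $\zv_{\xv}$ and set $\uv\coloneqq\nabla\omega(\xv-\zv_{\xv})$. The first-order optimality condition for $\min_{\zv} f(\zv)+\omega(\xv-\zv)$ reads $\mathbf{0}_n\in\partial f(\zv_{\xv})-\nabla\omega(\xv-\zv_{\xv})$, i.e. $\uv\in\partial f(\zv_{\xv})$; moreover $\uv=\nabla\omega(\xv-\zv_{\xv})\in\partial\omega(\xv-\zv_{\xv})$. Adding the subgradient inequalities for $f$ at $\zv_{\xv}$ and for $\omega$ at $\xv-\zv_{\xv}$ and taking the infimum over the splitting variable shows $\uv\in\partial(f\Box\omega)(\xv)$; since part (1) guarantees $f\Box\omega$ is differentiable at $\xv$, this unique subgradient must be the gradient, giving $\nabla(f\Box\omega)(\xv)=\nabla\omega(\xv-\zv_{\xv})$.

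The main obstacle is part (1): the differentiability and the Lipschitz bound must hold \emph{everywhere}, including at points where the inner infimum is not attained, so a direct subgradient argument alone will not suffice. The crux is therefore the correct deployment of the two conjugate correspondences---Lipschitz gradient of $\omega$ versus $\sigma$-strong convexity of $\omega^*$, and strong convexity of $h$ versus smoothness of $h^*$---together with verifying that $h$ is proper and lsc so that Fenchel--Moreau applies; the remaining steps are routine.
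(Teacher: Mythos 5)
Your proof is correct, but it takes a genuinely different route from the paper: the paper's own proof is essentially a two-line citation, attributing parts 1) and 2) to \cite[Thm. 4.1]{beck2012} and obtaining part 3) by combining the finite-valuedness and continuity of $f\Box\omega$ with \cite[Prop. 12.11]{bauschke2017}. You instead supply a self-contained argument: part 3) from convexity of infimal convolutions of convex functions plus continuity of finite-valued convex functions on $\Real^n$; part 1) via the conjugacy identity $(f\Box\omega)^*=f^*+\omega^*$, the smoothness/strong-convexity duality under conjugation, and Fenchel--Moreau; part 2) via first-order optimality at $\zv_\xv$ and a direct subgradient computation, with uniqueness of the subgradient supplied by part 1). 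The steps check out: $h\coloneqq f^*+\omega^*$ is proper and lower semicontinuous precisely because it is the conjugate of the member of $\Gamma_0(\Real^n)$ produced in your part 3), and it is $\sigma$-strongly convex since adding the convex $f^*$ preserves the strong convexity of $\omega^*$; hence $f\Box\omega=h^*$ has a $\frac{1}{\sigma}$-Lipschitz gradient \emph{everywhere}, which is exactly what lets your part 2) work even though attainment of the infimum is only assumed at the particular point $\xv$. Your ordering (3), (1), (2) correctly avoids circularity. What each approach buys: the paper's citation is short and leans on an established smoothing framework, whereas your argument makes the underlying mechanism explicit (it is, in essence, the standard proof behind the cited theorem) and keeps the result self-contained at the cost of invoking several classical conjugacy facts.
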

\begin{proof}
The results 1) and 2) follows from \cite[Thm. 4.1]{beck2012}. Combining the finite-valuedness, continuity of $(f\Box \omega)$ with \cite[Prop. 12.11]{bauschke2017} yields 3).
\end{proof}

From Proposition \ref{prop:inf_conv_smooth_approximation}, one can imagine that $f\Box\omega$ is a smooth convex approximation of $f$, and its shape can be adjusted by the smoothing function $\omega$. As will be shown in Section \ref{sec:pSDC_regularizer} and \ref{sec:overall_convexity_conditions}, by equipping $\omega$ with a shape-controlling tuning parameter $\Pcal$, a part of the pSDC regularizer would be deformable. And when the pSDC regularizer is adopted as $\Psi$ in (\ref{eq:sparse_recovery}), one can adjust $\Pcal$ to attain the overall-convexity of $J$ in (\ref{eq:sparse_recovery}).

\subsection{Related Works}\label{sec:related_works}
\subsubsection{Primitive CP Regularizers}

The idea of CP regularizers and NRC models dates back to over three decades ago \cite{blake1987,nikolova1998,nikolova1999}. However, earlier studies \cite{blake1987,nikolova1998,nikolova1999,parekh2016,suzuki2021} usually assume the presence of a strongly convex term (e.g., a strongly convex data fidelity or the $l_2$ regularization term), hence are fundamentally limited. For example, in \cite{nikolova1998}, the author proposed the following nonconvex regularizer:
\begin{equation}\label{eq:CP_regularizer_for_binary_image}
\Psi_{\mathrm{BI}}\parentheses{\xv;\alpha}= \sum_{i\sim j}\beta_{i,j}\lvert x_i-x_j \rvert-\alpha\sum_{i}\parentheses{x_i-\frac{1}{2}}^2
\end{equation}
for estimating binary images\footnote{One can verify that the first term of $\Psi_{\mathrm{BI}}$ is a convex term which promotes the correlated structure of the image, and the second term is a concave term which promotes the binarity of pixels.}, where $\xv\coloneqq\begin{bmatrix}x_1,\dots,x_n\end{bmatrix}^T\in\Real^n$ is the estimate of the unknown binary image, $i\sim j$ means that $x_i$ and $x_j$ are neighbouring pixels, $\bm{\beta}\coloneqq \parentheses{\beta_{i,j}}_{i\sim j}\subset \Real_+$, $\alpha>0$ is the shape controlling tuning parameter. 

Since the first term in (\ref{eq:CP_regularizer_for_binary_image}) is convex and the second is concave, $\Psi_{\mathrm{BI}}$ is the difference between two convex functions, i.e., $\Psi_{\mathrm{BI}}$ is a difference-of-convex (DC \cite{le2018}) function. We consider the quadratic data fidelity $F_{\mathrm{quad}}(\xv)\coloneqq \frac{1}{2}\norm{\yv-\Am\xv}^2_2$ and the following cost function
\[J_{\mathrm{BI}}(\xv;\alpha)\coloneqq F_{\mathrm{quad}}(\xv)+\lambda\Psi_{\mathrm{BI}}(\xv;\alpha).\]
It has been proved in \cite{nikolova1998} that if $\alpha$ satisfies\footnote{$\lambda_{\min}(\cdot)$ is the smallest eigenvalue of the input matrix.}
\begin{equation}\label{eq:convexity_condition_BI}
\lambda_{\min}(\Am^T\Am)\geq \lambda\alpha>0,
\end{equation}
then the concave second term of $\lambda\Psi_{\mathrm{BI}}$ would be overpowered by $F_{\mathrm{quad}}$, whereby the cost function $J_{\mathrm{BI}}(\cdot;\alpha)$ is convex.

This example reveals an important fact, that is: \textit{if there exists a strongly convex term in the cost function $J$, then we can introduce some concave terms into the regularizer $\Psi$ to improve its regularizing properties while maintaining the overall-convexity of the cost function}. However, (\ref{eq:convexity_condition_BI}) implies the nonsingularity of $\Am^T\Am$, which usually fails to hold in applications such as sparse recovery problems.

\subsubsection{The Generalized Minimax Concave Penalty}
\label{sec:GMC}
The first CP regularizer that does not require strong convexity of $F_{\mathrm{quad}}$ (i.e., nonsingularity of $\Am^T\Am$) is the generalized minimax concave (GMC) penalty \cite{selesnick2017} defined as follows:
\begin{equation}
\Psi_{\text{GMC}}(\xv;\Bm)\coloneqq l_1(\xv)-(l_1\Box q_{\Bm})(\xv),
\end{equation}
where $l_1(\xv)\coloneqq \norm{\xv}_1$ is the $l_1$-norm, $q_{\Bm}(\xv)\coloneqq \frac{1}{2}\norm{\Bm\xv}^2_2$ is a quadratic smoothing function with $\Bm\in\Real^{p\times n}$ being the shape controlling tuning parameter. The GMC penalty is the difference between the $l_1$-norm and its inf-conv smooth approximation by $q_{\Bm}$, hence as $\Psi_{\mathrm{BI}}$, $\Psi_{\mathrm{GMC}}$ is a DC function. 

We note that the GMC penalty is a nonseparable multidimensional generalization of the minimax concave penalty (MCP \cite{zhang2010}), more precisely, if $\Bm^T\Bm$ is diagonal, then $\Psi_{\mathrm{GMC}}$ reproduces a weighted sum of MCP, which accounts for the name of the GMC penalty. In contrast to the standard MCP, the shape of $\Psi_{\text{GMC}}(\cdot;\Bm)$ can be adjusted flexibly via changing $\Bm$. For the following cost function
\[J_{\mathrm{GMC}}(\xv;\Bm)\coloneqq F_{\mathrm{quad}}(\xv)+\lambda\Psi_{\mathrm{GMC}}(\xv;\Bm),\]
it is proved in \cite{selesnick2017} that if $\Bm$ satisfies
\begin{equation}\label{eq:convexity_condition_GMC}
\Am^T \Am\succeq \lambda \Bm^T \Bm,
\end{equation}
then the concave term $-\lambda(l_1\Box q_{\Bm})(\xv)$ is overpowered by $F_{\mathrm{quad}}(\xv)$, and the cost function $J_{\mathrm{GMC}}(\cdot;\Bm)$ is convex. 

Remarkably, (\ref{eq:convexity_condition_GMC}) does not require $\Am^T\Am$ to be nonsingular as (\ref{eq:convexity_condition_BI}) does. Instead, by (\ref{eq:convexity_condition_GMC}), $\Psi_{\mathrm{GMC}}$ is able to exploit the "partially strong convexity" of $F_{\mathrm{quad}}$, more precisely, \textit{if the data fidelity term is strongly convex in certain direction (e.g. the eigenvectors of $\Am^T\Am$ with nonzero eigenvalues), then one can introduce concavity into $\Psi_{\mathrm{GMC}}$ in this direction to achieve better approximation of the $l_0$ pseudo-norm}.

\noindent
\subsubsection{Extensions of the GMC Penalty}
Certain efforts have been made to broaden applicability of $\Psi_{\mathrm{GMC}}$. One notable extension is the linearly involved generalized Moreau enhanced (LiGME) model \cite{abe2019,abe2020}:
\begin{equation}
\Psi_{\text{LiGME}}(\xv;\Bm)= \psi(\Lm \xv)-(\psi\Box q_{\Bm})(\Lm \xv),
\end{equation}
where $\Lm\in\Real^{q\times n}$ is the analysis matrix which encodes the sparsifying domain of the interested signal; $\psi\in\Gamma_0(\Real^q)$ is a kernel function which is no longer restricted to the $l_1$-norm, but can be any proximable function. Accordingly, the LiGME model allows applying the construction technique of GMC to more general convex kernel functions. A variant of the LiGME model with split feasibility type constraints is studied in \cite{yata2021}. 

Another useful extension of GMC is the sharpening sparse regularizers (SSR) framework \cite{alshabili2021}:
\begin{equation}
\Psi_{\text{SSR}}(\xv;\Bm)\coloneqq l_1(\xv)-((l_1\circ \Lm)\Box (\Phi\circ \Bm))(\xv),
\end{equation}
where $\Lm\in\Real^{q\times n}$ is the analysis matrix which is embedded at a different position from the LiGME model, $\Phi(\zv)\coloneqq \sum_{i=1}^q \phi(z_i)$ with $\phi\in\Gamma_0(\Real)$ is an isotropic smoothing function which is not restricted to the $l_2$-norm. While the SSR model does not consider variability of the kernel function, it allows adopting different smoothing function $\Phi$, thus can adjust the shape of the regularizer more delicately. 

So far, overall-convexity conditions and proximal splitting type \cite{combettes2011} algorithms have been developed independently with respect to the GMC \cite{selesnick2017}, LiGME \cite{abe2020} and SSR \cite{alshabili2021} models.

\renewcommand{\arraystretch}{1.5}
\begin{table*}
\centering
\caption{Prior arts as special instances of the pSDC regularizers}
\label{table:pSDC_reproduce_prior_arts}
\begin{tabular}{|c|c|c|c|c|c|}
\hline
&  $\psi_1(\xv)$ & $\psi_2(\zv)$ & $\phi_{\Pcal}(\zv)$ & $\Mm$ & $\Pcal$\\
\hline
$\Psi_{\mathrm{BI}}(\cdot;\alpha)$ & $\sum_{i\sim j}\beta_{i,j}\lvert x_i-x_j \rvert$ & $\sum_{i=1}^q \iota_{\braces{1/2}}(z_i)$ & $\alpha\norm{\zv}^2_2$ & $\Imat_n$ & $\alpha$\\
\hline
$\Psi_{\mathrm{GMC}}(\cdot;\Bm)$ & $\norm{\xv}_1$ & $\norm{\zv}_1$  & $\frac{1}{2}\norm{\Bm\zv}^2_2$ & $\Imat_n$ & $\Bm$\\
\hline
$\Psi_{\mathrm{LiGME}}(\cdot;\Bm)$ & $\psi(\Lm\xv)$ & $\psi(\zv)$ &  $\frac{1}{2}\norm{\Bm\zv}^2_2$ & $\Lm$ & $\Bm$\\
\hline
$\Psi_{\mathrm{SSR}}(\cdot;\Bm)$ & $\norm{\xv}_1$ & $\norm{\Lm\zv}_1$ & $\Phi(\Bm\zv)$ & $\Imat_n$ & $\Bm$\\
\hline
\end{tabular}
\end{table*}

\section{The Partially Smoothed Difference-of-Convex Regularizers}
\label{sec:pSDC_regularizer}
\subsection{Abstract Formulation}
The proposed pSDC regularizer is formulated as follows\footnote{The formulation of the pSDC regularizer was previously reported in \cite{zhang2022}. In this paper we newly add Example \ref{exp:capped_l1} to illustrate the applicability of the pSDC regularizer to more general regularization problems. Besides, the overall-convexity conditions discussed in Section \ref{sec:overall_convexity_conditions} are more general.}:
\begin{equation}\label{eq:cp_regularizer}
\Psi_{{p}}(\xv;\mathcal{P})\coloneqq \psi_1(\xv)-(\psi_2\Box\phi_{\mathcal{P}})(\Mm \xv),
\end{equation}
where $\psi_1\in\Gamma_0(\Real^n)$ and $\psi_2\in\Gamma_0(\Real^q)$ are kernel functions, $\Mm\in\Real^{q\times n}$ is an analysis matrix, $\Pcal$ is the shape controlling tuning parameter (which may be a number, a vector or a matrix) and the corresponding smoothing function $\phi_{\Pcal}\in\Gamma_0(\Real^q)$ is differentiable with $\nabla\phi_{\Pcal}$ being Lipschitz continuous on $\Real^q$ with constant $1/\sigma_{\Pcal}$. In contrast to the LiGME and SSR models introduced above, the pSDC regularizer permits at the same time the presence of analysis matrix, the variability of both the kernel functions and the smoothing function, thus its construction is very flexible (see Table \ref{table:pSDC_reproduce_prior_arts}).

The pSDC regularizer is the difference between $\psi_1(\cdot)$ and $(\psi_2\Box\phi_{\Pcal})(\Mm\cdot)$, where the former is a kernel function and the latter is a linearly involved smooth approximation of another kernel function. By Proposition \ref{prop:inf_conv_smooth_approximation}, both $\psi_1(\cdot)$ and $(\psi_2\Box\phi_{\Pcal})(\Mm\cdot)$ are convex functions, hence similar to existing CP regularizers introduced in Section \ref{sec:related_works}, $\Psi_{{p}}(\cdot;\Pcal)$ is a parameterized DC function whose subtrahend part can be adjusted by the tuning parameter $\Pcal$.

In applications, we expect $\psi_1,\psi_2$ to determine the basic landscape of the pSDC regularizer such that $\Psi_{{p}}(\cdot;\Pcal)$ serves as a qualified regularizer (which is the reason we call $\psi_1,\psi_2$ kernel functions). On the other hand, we adjust the shape of $\Psi_{{p}}(\cdot;\Pcal)$ by the smoothing function $\phi_{\Pcal}$ to attain overall-convexity.
More precisely, if $\Pcal$ satisfies the overall-convexity conditions specified in Section \ref{sec:overall_convexity_conditions}, the cost function in (\ref{eq:sparse_recovery}) is convex despite nonconvexity of $\Psi_{{p}}(\cdot;\Pcal)$.

We note that the pSDC regularizer is a unification and generalization of prior arts (cf. Example \ref{exp:existing_CP_sparse_regularizers}). In the sequel, we demonstrate the powerful representability of pSDC regularizers by presenting concrete examples. The convexity-preserving property of the pSDC regularizer will be studied in Section \ref{sec:overall_convexity_conditions}.

\subsection{Concrete Examples of the pSDC Regularizers}

First, we show that all of the existing CP regularizers introduced in Section \ref{sec:related_works} can be regarded as special instances of the pSDC regularizer. See the following example.

\begin{example}[Existing CP regularizers]\label{exp:existing_CP_sparse_regularizers}
Assigned with proper building blocks, $\Psi_{\mathrm{p}}$ reproduces $\Psi_{\mathrm{BI}}$ \cite{nikolova1998}, $\Psi_{\mathrm{GMC}}$ \cite{selesnick2017}, $\Psi_{\mathrm{LiGME}}$ \cite{abe2020} and $\Psi_{\mathrm{SSR}}$ \cite{alshabili2021}, as summarized in Table \ref{table:pSDC_reproduce_prior_arts}.
\end{example}

Moreover, since $(\psi_2\Box \phi_{\Pcal})$ is a smooth approximation of $\psi_2$, $\Psi_{\mathrm{pSDC}}$ can be regarded as a partially smoothed approximation of the function $\psi_1(\xv)-\psi_2(\Mm \xv)$. Thus for every DC-type nonconvex regularizer, we can build a pSDC regularizer as its partially smoothed approximation. We note that the class of DC functions is a very broad class such that every continuous function can be approximated to arbitrary precision by a DC function \cite[Prop. 2.3 (ii)]{horst1999}. In particular, it has been known that many nonconvex sparse regularizers are in this class \cite{le2015}. Accordingly, the pSDC regularizer certainly encompasses a large number of promising new regularizers. 

 In the following, we present an illustrative example to show how one can construct the pSDC version of a DC-type nonconvex sparse regularizer. Although our example is for sparse regularization, we note that the same technique can be applied for constructing general purpose CP regularizers (if given a DC-type regularizer as prototype).

\begin{figure*}
\centering
\begin{subfigure}{0.49\textwidth}
\centering
\includegraphics[width=\linewidth]{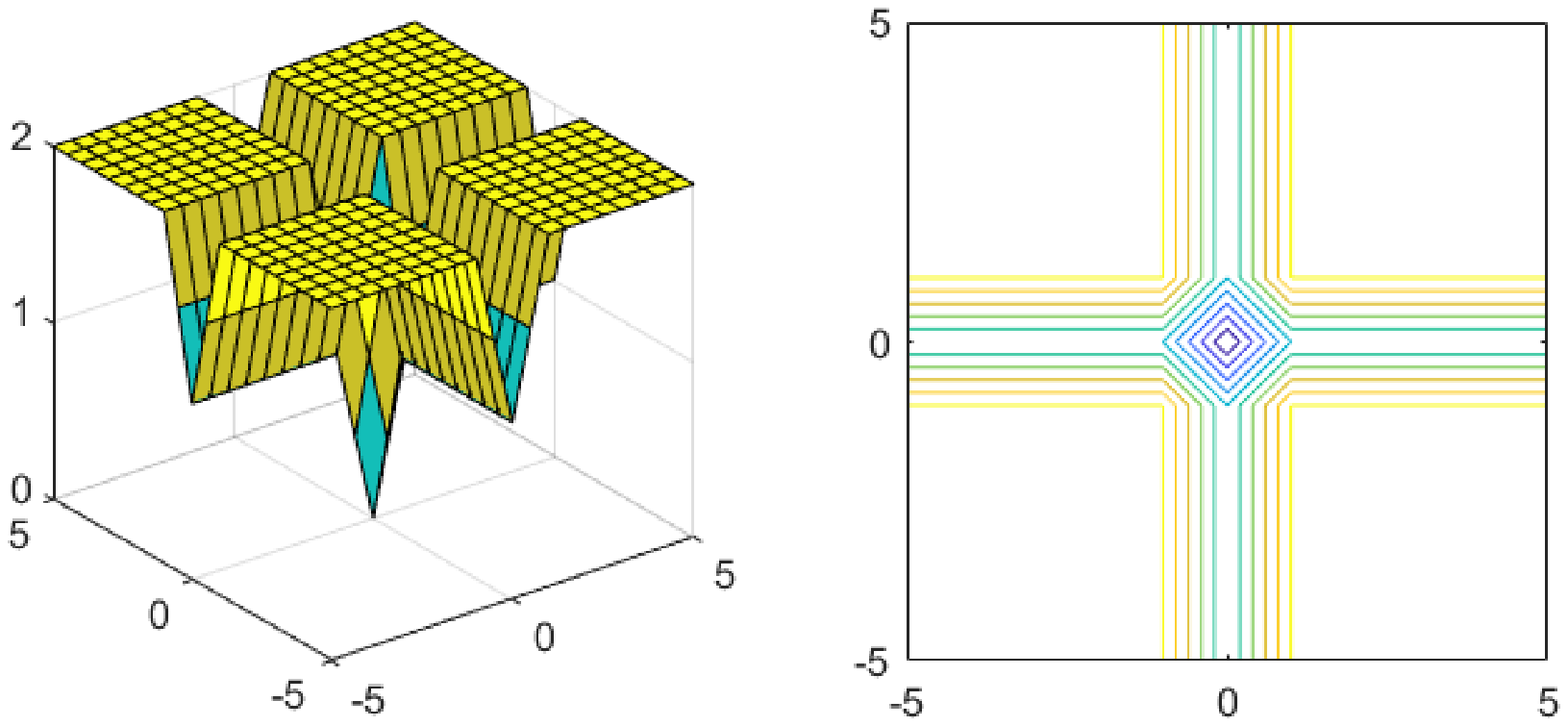}
\caption{$\Psi_{\mathrm{capped}}$}
\end{subfigure}
\begin{subfigure}{0.49\textwidth}
\centering
\includegraphics[width=\linewidth]{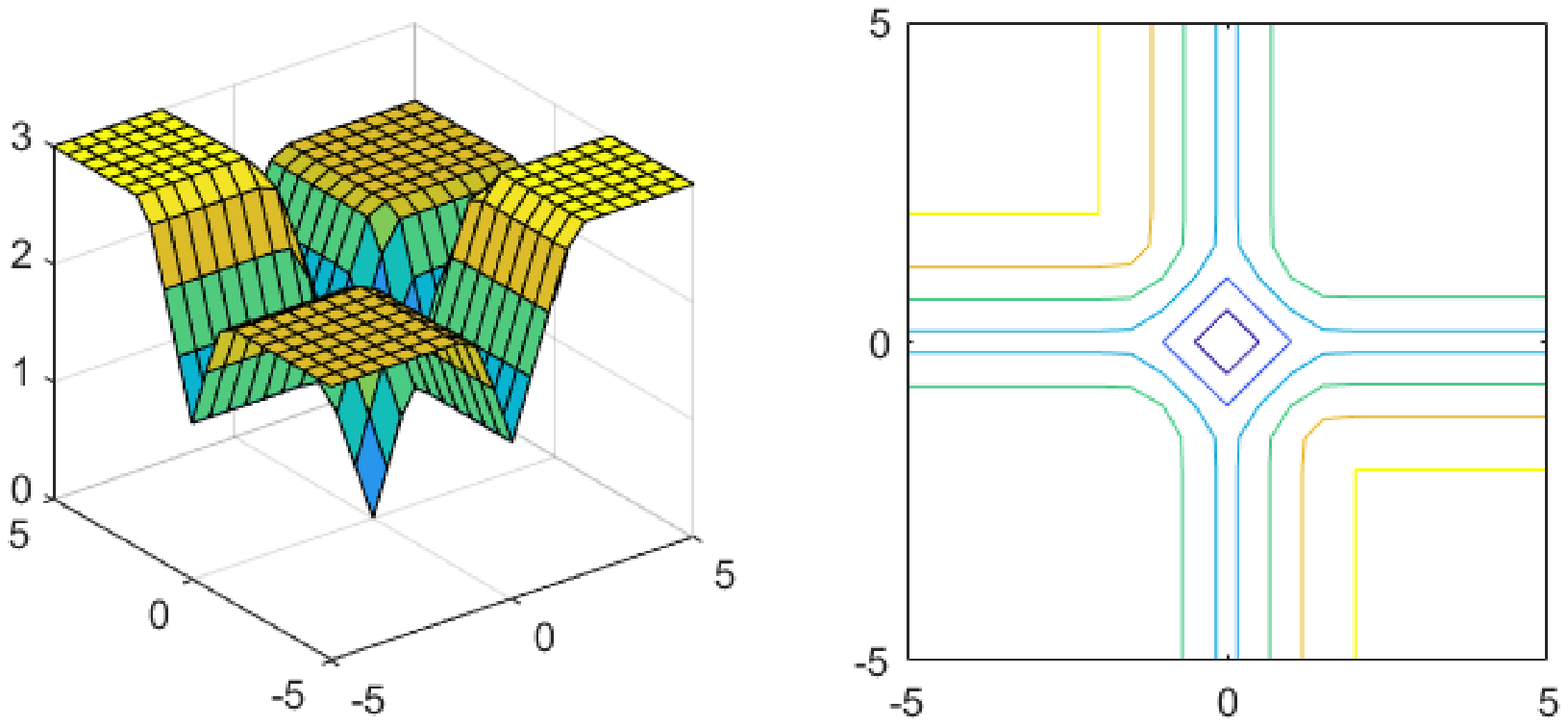}
\caption{the pSDC version of $\Psi_{\mathrm{capped}}$}
\end{subfigure}

\caption{2D surface and contours of $\Psi_{\mathrm{capped}}$ and its pSDC version.}
\label{fig:capped_l1}
\end{figure*}

\begin{example}[The pSDC version of a DC type regularizer]\label{exp:capped_l1}
We consider the capped-$l_1$ penalty \cite{peleg2008}, which is a DC-type sparse regularizer formulated as follows:
\begin{equation*}
\Psi_{\mathrm{capped}}(\xv)=\norm{\xv}_1-\sum_{i=1}^n\max\braces{|x_i|-1,0}.
\end{equation*}
To construct the pSDC version of $\Psi_{\mathrm{capped}}$, we adopt $\Mm\coloneqq \Imat_n$ and the following setting in (\ref{eq:cp_regularizer}):
\begin{align*}
\psi_1(\xv) &\coloneqq \norm{\xv}_1, \\
\psi_2(\xv) &\coloneqq \sum_{i=1}^n\max\braces{|x_i|-1,0},\\
\phi_{\Pcal}(\xv) &\coloneqq \frac{1}{2}\norm{\Bm\xv}^2_2 \;\mathrm{with}\; \Pcal\coloneqq \Bm.
\end{align*}
The resulting pSDC regularizer is depicted in Fig. \ref{fig:capped_l1} with 
\[\Bm^T\Bm=\begin{bmatrix}2 & 1\\ 1 & 2\end{bmatrix}.\] From the figure, one can verify that the pSDC version of $\Psi_{\mathrm{capped}}$ is a deformable approximation of the original regularizer, and has similar sparseness-promoting property as $\Psi_{\mathrm{capped}}$. One can imagine that the shape of the pSDC version of $\Psi_{\mathrm{capped}}$ is close to the original regularizer $\Psi_{\mathrm{capped}}$ if $\Bm^T\Bm$ is close to $\Imat_n$. However, such a matrix $\Bm^T\Bm$ usually does not satisfy the overall-convexity condition specified in Section \ref{sec:overall_convexity_conditions}. In practice, one has to trade off the sparseness-promoting performance against the overall-convexity.
\end{example}

\section{A General Nonconvexly Regularized Convex Model and Its Overall-Convexity Conditions}
\label{sec:overall_convexity_conditions}

In this section, we consider a general nonconvexly regularized convex (NRC) model which incorporates multiple pSDC regularizers along with general convex data fidelity terms and convex constraints, and we conduct a thorough discussion on its overall-convexity conditions.
\subsection{The Proposed NRC Model}\label{subsec:NRC_model}
As shown in Section \ref{sec:related_works}, the overall-convexity of the NRC model relies on the presence of "partially strongly convex" terms. Hence to derive the overall-convexity condition, we must take the whole cost function into account. Without loss of generality, we can assume that the cost function of the interested signal estimation problem is composed of:
\begin{enumerate}
\item a convex data fidelity term  $F_0\in\Gamma_0(\Real^n)$,
\item a number of convex constraints, and we assume that their intersection is a nonempty closed convex feasible set (denoted by $C_0\subset\Real^n$),

\item a number of pSDC regularizers defined as
\begin{equation*}
\Psi_{p}^{(i)}(\xv;\Pcal_i)\coloneqq \psi_{1}^{(i)}(\xv)-\parentheses{\psi_{2}^{(i)}\Box\phi_{\Pcal_i}^{(i)}}(\Mm_i\xv),
\end{equation*}
where $i=1,\dots,r$, $\Mm_i\in\Real^{q_i\times n}$, $\psi_{1}^{(i)}\in\Gamma_0(\Real^n)$, $\psi_{2}^{(i)}\in\Gamma_0(\Real^{q_i})$, $\phi^{(i)}_{\Pcal_i}\in\Gamma_0(\Real^{q_i})$
with $\Pcal_{i}$ being the shape controlling tuning parameter. In addition, we assume that the weight parameter $\lambda_{i}$ for $\Psi_{p}^{(i)}$ is positive.
\end{enumerate}
Especially, we note that by setting $\psi_{2}^{(i)}(\cdot)=\phi^{(i)}_{\Pcal_i}(\cdot)\coloneqq 0$, we have $\Psi_{p}^{(i)}(\xv;\Pcal_i)\coloneqq \psi_1^{(i)}(\xv)\in\Gamma_0(\Real^n)$, hence every proper, lower semicontinuous convex regularizer can be regarded as a special instance of pSDC regularizer.

Taking all the terms above into account yields the following regularization problem:
\begin{equation}\label{eq:example_of_NRC_model}
\underset{\xv\in C_0}{\text{minimize}}\;\;F_0(\xv)+\sum_{i=1}^r\lambda_{i}\Psi_{p}^{(i)}(\xv;\Pcal_i).
\end{equation}
Substituting the expression of $\Psi_{p}^{(i)}(\xv;\Pcal_i)$ into (\ref{eq:example_of_NRC_model}) yields
\begin{align*}
\underset{\xv\in C_0}{\text{minimize}}\;\;F_0(\xv)+\sum_{i=1}^r\lambda_{i} \psi_{1}^{(i)}(\xv)-\sum_{i=1}^r\lambda_{i} \parentheses{\psi_{2}^{(i)}\Box \phi^{(i)}_{\Pcal,i}}(\Mm_i\xv).
\end{align*}
One can verify that if we define $\zv_i\in\Real^{q_i}$,
\[\zv\coloneqq \begin{bmatrix}\zv_1^T & \cdots & \zv_r^T\end{bmatrix}^T\in\Real^{(q_1+\dots+q_r)}=:\Real^q\]
and the following:
\begin{align}
F_1(\xv) &\coloneqq F_0(\xv)+\sum_{i=1}^r\lambda_{i} \psi_{1}^{(i)}(\xv)+\iota_{C_0}(\xv),\label{eq:unification_begin}\\
F_2(\zv) &\coloneqq \sum_{i=1}^r\lambda_{i} \psi_{2}^{(i)}(\zv_i),\\
\Phi_{\mathcal{P}}(\zv) &\coloneqq \sum_{i=1}^r\lambda_{i} \phi^{(i)}_{\Pcal_i}(\zv_i)\;\;\textrm{with}\; \Pcal\coloneqq (\Pcal_1,\dots,\Pcal_r),\\
\bm{\Xi} &\coloneqq \begin{bmatrix}
\Mm_1^T, \Mm_2^T,\dots, \Mm_r^T
\end{bmatrix}^T,\label{eq:unification_end}
\end{align}
then (\ref{eq:example_of_NRC_model}) can be transformed into a more compact form:
\begin{equation}\label{eq:NRC_model}
\underset{\xv\in\Real^n}{\mathrm{minimize}}\;\; F_1(\xv)-(F_2\Box\Phi_{\Pcal})(\Xim\xv).
\end{equation}

For simplicity of discussion and superior representability, hereafter we consider the abstract cost function of (\ref{eq:NRC_model}):
\begin{equation}\label{eq:NRC_model_cost_function}
J_{p}(\xv;\mathcal{P})\coloneqq F_1(\xv)-(F_2\Box\Phi_{\mathcal{P}})(\bm{\Xi} \xv),
\end{equation}
where we assume $F_1\in\Gamma_0(\Real^n),F_2\in\Gamma_0(\Real^q)$, $\Xim\in\Real^{q\times n}$, $\Phi_{\mathcal{P}}\in\Gamma_0(\Real^q)$ is differentiable with $\nabla \Phi_{\Pcal}$ being Lipschitz continuous on $\Real^q$ with constant $1/\sigma_{\Pcal}$. Comparing (\ref{eq:NRC_model_cost_function}) with (\ref{eq:cp_regularizer}), one can verify that $J_{p}(\cdot;\Pcal)$ has a similar structure to the pSDC regularizer, and is also a DC function.

\subsection{The Overall-Convexity Conditions}
In this section, we formally establish the overall-convexity conditions with respect to $J_{p}(\cdot;\Pcal)$. In a specific regularization problem, $F_0$ and $C_0$ in (\ref{eq:example_of_NRC_model}) are usually given in advance. In addition, to ensure certain regularizing properties of the pSDC regularizer, $\psi_{1}^{(i)}$ and $\psi_{2}^{(i)}$ in $\Psi_{p}^{(i)}$ are assigned in advance, hence $F_1$ and $F_2$ in (\ref{eq:NRC_model_cost_function}) are fixed. In this case, for a fixed set of the weight parameters $\braces{\lambda_i}_{i=1}^r$, one need to find a proper value of $\Pcal$, say $\Pcal_o$ (where ``o" means overall-convexity), so that $J_{p}(\cdot;\Pcal_o)$ is convex. In the sequel, we discuss about the conditions that $
\Pcal_o$ must satisfy.

Before presenting the results, we make the following assumption to ensure that in (\ref{eq:NRC_model_cost_function}), $(F_2\Box\Phi_{\Pcal})$ is an inf-conv smooth approximation (cf. Definition \ref{def:inf_conv_approximation}) of $F_2$.

\begin{assumption}\label{assump:finite_inf_conv}
For every $\sv\in\Real^q$, $(F_2\Box\Phi_{\Pcal_o})(\sv)$ is finite, i.e., $(F_2\Box\Phi_{\Pcal_o})(\cdot)$ is real-valued.
\end{assumption}

We note that if $F_2$ and $\Phi_{\Pcal_o}$ are both bounded from below, then Assumption \ref{assump:finite_inf_conv} is satisfied, hence the assumption is not limiting. Under Assumption \ref{assump:finite_inf_conv}, we present the following overall-convexity condition, from which one can develop a number of more easily verifiable conditions.
\begin{theorem}\label{thm:general_convexity_condition}
Suppose that Assumption \ref{assump:finite_inf_conv} holds, and that $\mathcal{P}_o$ satisfies that for every $\zv\in\mathrm{dom}\;{F_2}\subset\Real^q$,
\begin{equation}\label{eq:general_convexity_condition}
M_{\zv}(\cdot;\mathcal{P}_o)\coloneqq F_1(\cdot)-\Phi_{\mathcal{P}_o}(\bm{\Xi}\cdot-\zv)
\end{equation}
is convex, then $J_{p}(\cdot;\mathcal{P}_o)\in\Gamma_0(\Real^n)$.
\end{theorem}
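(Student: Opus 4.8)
The plan is to exhibit $J_{p}(\cdot;\Pcal_o)$ as a pointwise supremum of a family of convex, lower semicontinuous functions indexed by $\zv\in\dom F_2$, and then to invoke the stability of convexity and lower semicontinuity under suprema, treating properness separately. The crucial observation is that negating an infimal convolution turns it into a supremum, after which the hypothesized convexity of $M_{\zv}(\cdot;\Pcal_o)$ can be used directly.

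First I would unfold the infimal convolution. By definition,
\[(F_2\Box\Phi_{\Pcal_o})(\Xim\xv)=\inf_{\zv\in\Real^q}\parentheses{F_2(\zv)+\Phi_{\Pcal_o}(\Xim\xv-\zv)}.\]
Since $\Phi_{\Pcal_o}$ is real-valued on $\Real^q$, every $\zv\notin\dom F_2$ contributes $+\infty$ to the bracketed quantity, so the infimum may be taken over $\dom F_2$; moreover, by Assumption \ref{assump:finite_inf_conv} this infimum is finite. Negating it and adding $F_1(\xv)$, which does not depend on $\zv$, converts the subtracted term into a supremum:
\[J_{p}(\xv;\Pcal_o)=\sup_{\zv\in\dom F_2}\brackets{F_1(\xv)-\Phi_{\Pcal_o}(\Xim\xv-\zv)-F_2(\zv)}=\sup_{\zv\in\dom F_2}\brackets{M_{\zv}(\xv;\Pcal_o)-F_2(\zv)}.\]

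For each fixed $\zv\in\dom F_2$, the function $M_{\zv}(\cdot;\Pcal_o)-F_2(\zv)$ is the hypothesized convex function $M_{\zv}(\cdot;\Pcal_o)$ shifted by the finite constant $-F_2(\zv)$, hence convex; it is moreover lower semicontinuous, since $F_1\in\Gamma_0(\Real^n)$ is lower semicontinuous while $\xv\mapsto\Phi_{\Pcal_o}(\Xim\xv-\zv)$ is continuous (a differentiable function composed with an affine map). A pointwise supremum of convex lower semicontinuous functions is again convex and lower semicontinuous, so $J_{p}(\cdot;\Pcal_o)$ inherits both properties. For properness, Assumption \ref{assump:finite_inf_conv} makes $(F_2\Box\Phi_{\Pcal_o})(\Xim\xv)$ finite for every $\xv$, and since $F_1\in\Gamma_0(\Real^n)$ never attains $-\infty$, we get $J_{p}(\xv;\Pcal_o)>-\infty$ everywhere; choosing any $\xv_0\in\dom F_1\neq\emptyset$ gives $J_{p}(\xv_0;\Pcal_o)<+\infty$, so the effective domain is nonempty. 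Combining these facts yields $J_{p}(\cdot;\Pcal_o)\in\Gamma_0(\Real^n)$.

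I expect the only delicate step to be the first one: correctly restricting the supremum index to $\dom F_2$ and ensuring the resulting value is finite rather than an indeterminate $+\infty-\infty$, which is precisely where Assumption \ref{assump:finite_inf_conv} and the real-valuedness of $\Phi_{\Pcal_o}$ enter. Once the supremum representation is secured, the convexity, lower semicontinuity, and properness conclusions all follow from textbook stability properties of $\Gamma_0(\Real^n)$.
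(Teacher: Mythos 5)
Your proposal is correct and follows essentially the same route as the paper's own proof: rewrite $J_{p}(\cdot;\Pcal_o)$ as $\sup_{\zv\in\dom F_2}\brackets{M_{\zv}(\cdot;\Pcal_o)-F_2(\zv)}$, note each indexed function is convex by hypothesis and lower semicontinuous because $F_1$ is lsc and $\Phi_{\Pcal_o}(\Xim\cdot-\zv)$ is continuous, invoke stability of convexity and lower semicontinuity under pointwise suprema, and handle properness via Assumption \ref{assump:finite_inf_conv} together with a point $\xv_0\in\dom F_1$. Your explicit remark that $J_{p}$ never takes the value $-\infty$ is a small touch the paper leaves implicit, but it does not change the argument.
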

\begin{proof}
See Appendix \ref{app:general_convexity_condition}.
\end{proof}

Theorem \ref{thm:general_convexity_condition} indicates that the convexity of $J_{p}(\cdot;\Pcal_{o})$ in (\ref{eq:NRC_model_cost_function}) only depends on the relation between the minuend kernel function $F_1$ and the smoothing function $\Phi_{\Pcal_o}$, and is independent of the subtrahend kernel function $F_2$. More precisely, if $\Phi_{\Pcal_o}$ is ``flat" enough so that for every displacement $\zv\in\Real^q$, the convexity of $\Phi_{\Pcal_o}(\Xim\cdot-\zv)$ is overpowered by that of $F_1$, then the smoothed subtrahend function $(F_2\Box\Phi_{\Pcal_o})(\Xim\cdot)$ in (\ref{eq:NRC_model_cost_function}) is "flat" enough and is overpowered by $F_1$, which in turn guarantees the convexity of $J_{p}(\cdot;\Pcal_o)$.

Especially, we note that the overall-convexity condition described in Theorem \ref{thm:general_convexity_condition} is a sufficient condition but not a necessary one. See the following 1d counter example.
\begin{example}
Let $F_1(x)\coloneqq x^2/2$, $F_2(z)\coloneqq z^2$, $\Xi\coloneqq 1$, $\Phi_{\Pcal_o}(z)\coloneqq z^2$, then one can verify that for $z=0$, 
\begin{equation*}
M_0(x;\Pcal_o)=\frac{x^2}{2}-(x-0)^2=-\frac{x^2}{2}
\end{equation*}
is nonconvex. However, substituting the expressions of $F_1,F_2,\Xi,\Phi_{\Pcal_o}$ into (\ref{eq:NRC_model_cost_function}) yields
\begin{align*}
J_{p}(x;\Pcal_o) &=\frac{x^2}{2}-\min_{z\in\Real}\parentheses{z^2+(x-z)^2}\\
&=\frac{x^2}{2}-\frac{x^2}{2}=0,
\end{align*}
which implies that $J_{p}(\cdot;\Pcal_o)$ is convex although the convexity of $M_{z}(\cdot;\mathcal{P}_o)$ does not hold for $z=0\in\dom F_2$.
\end{example}

Theorem \ref{thm:general_convexity_condition} naturally implies the following overall-convexity condition, which is easier to verify compared to the condition proposed in Theorem \ref{thm:general_convexity_condition}.

\begin{corollary}\label{corol:convexity_condition}
1) Suppose that in (\ref{eq:NRC_model_cost_function}), $\Phi_{\Pcal_o}$ is twice continuously differentiable on $\Real^q$ and $F_1$ can be expressed as
\begin{equation}\label{eq:decomposition_of_F1}
F_1(\xv)=F_{1}^s(\xv)+F_{1}^n(\xv)
\end{equation}
with $F_{1}^s,F_{1}^n\in\Gamma_0(\Real^n)$. In addition, suppose that there exists an open set $C_1^s\supset \dom{F_1}$ such that $F_{1}^s$ is twice continuously differentiable on $C_1^s$. In this case, if $\Pcal_o$ satisfies that
\begin{equation*}
(\forall \xv\in C_1^s, \forall \zv\in\Real^q)\;\;\nabla^2F_{1}^s(\xv)\succeq \Xim^T \nabla^2 \Phi_{\Pcal_o}(\zv)\Xim,
\end{equation*}
then $J_{p}(\cdot;\Pcal_o)\in\Gamma_0(\Real^n)$.

\noindent
2) Moreover, suppose that $F_1^s$ is expressed as
\begin{align*}
F_1^s(\xv)\coloneqq \sum_{i=1}^m \xi_i(\av_i^T\xv),
\end{align*}
where $\Am\coloneqq\begin{bmatrix}\av_1,\av_2,\cdots,\av_m\end{bmatrix}^T\in\Real^{m\times n}$, $\xi_i\in\Gamma_0(\Real)$ is twice continuously differentiable on $\dom{\xi_i}$ satisfying:
\begin{equation*}
(\forall \xv\in C_1^s)\;\; \xi_i''(\av_i^T\xv)\geq \gamma_i\geq 0,\; i=1,2,\dots,m;
\end{equation*}
In this case, let $\Phi_{\Pcal}(\zv)\coloneqq \sum_{j=1}^p\eta_j(\bv_j^T\zv)$, where $\Pcal\coloneqq \Bm\coloneqq \begin{bmatrix}\bv_1,\bv_2,\cdots,\bv_p\end{bmatrix}^T\in\Real^{p\times q}$, $\eta_j\in\Gamma_0(\Real)$ is twice continuously differentiable on $\dom{\eta_j}$ satisfying
\begin{equation*}
(\forall z\in \Real)\;\; 0\leq\eta''_j(z)\leq \kappa_j,\; j=1,2,\dots,p.
\end{equation*}
Then if $\Pcal_o\coloneqq \Bm_o\in\Real^{p\times q}$ satisfies the following inequality:
\begin{equation*}
\Am^T\mathrm{diag}\parentheses{\gamma_1,\dots,\gamma_m}\Am \succeq \Xim^T\Bm_o^T\mathrm{diag}\parentheses{\kappa_1,\dots,\kappa_p}\Bm_o\Xim,
\end{equation*}
we have $J_{p}(\cdot;\Pcal_o)\in\Gamma_0(\Real^n)$.
\end{corollary}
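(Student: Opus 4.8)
The plan is to obtain both parts as consequences of Theorem~\ref{thm:general_convexity_condition}, whose hypothesis asks that $M_{\zv}(\cdot;\Pcal_o)=F_1(\cdot)-\Phi_{\Pcal_o}(\Xim\cdot-\zv)$ be convex for every $\zv\in\dom F_2$. So for Part~1) I would fix an arbitrary $\zv\in\Real^q$ and verify that $M_{\zv}(\cdot;\Pcal_o)$ is convex, which by Theorem~\ref{thm:general_convexity_condition} immediately yields $J_{p}(\cdot;\Pcal_o)\in\Gamma_0(\Real^n)$. First I would split $M_{\zv}=F_1^n+h_{\zv}$ with $h_{\zv}(\xv)\coloneqq F_1^s(\xv)-\Phi_{\Pcal_o}(\Xim\xv-\zv)$. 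Since $F_1^n\in\Gamma_0(\Real^n)$ is already convex, it suffices to check the Jensen inequality for the sum. Because $\Phi_{\Pcal_o}$ is real-valued, $M_{\zv}$ equals $+\infty$ exactly off $\dom F_1$, so the inequality is trivial unless both endpoints lie in $\dom F_1$; hence I only need to treat a segment $[\xv_0,\xv_1]$ with $\xv_0,\xv_1\in\dom F_1$.

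Such a segment lies in the convex set $\dom F_1\subseteq C_1^s$. On $C_1^s$ the map $h_{\zv}$ is twice continuously differentiable, since $F_1^s$ is $C^2$ there and $\Phi_{\Pcal_o}$ is $C^2$ on $\Real^q$, with Hessian $\nabla^2 h_{\zv}(\xv)=\nabla^2 F_1^s(\xv)-\Xim^T\nabla^2\Phi_{\Pcal_o}(\Xim\xv-\zv)\Xim$. Applying the standing hypothesis with the displacement $\Xim\xv-\zv\in\Real^q$ shows this Hessian is positive semidefinite on $C_1^s$. Restricting to the segment through $\varphi(t)\coloneqq h_{\zv}(\xv_0+t(\xv_1-\xv_0))$, openness of $C_1^s$ supplies a slightly larger open interval on which $\varphi$ is $C^2$ with $\varphi''(t)=(\xv_1-\xv_0)^T\nabla^2 h_{\zv}(\cdot)(\xv_1-\xv_0)\geq 0$, so $\varphi$ is convex on $[0,1]$. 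Combining this with the convexity of $t\mapsto F_1^n(\xv_0+t(\xv_1-\xv_0))$ (finite on the segment, hence no $\infty-\infty$ ambiguity) delivers the Jensen inequality for $M_{\zv}$, establishing its convexity.

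The main obstacle in Part~1) is precisely this extended-real-valued bookkeeping: $C_1^s$ is only assumed open, not convex, so the textbook ``positive semidefinite Hessian $\Rightarrow$ convex'' statement does not apply to $h_{\zv}$ verbatim. The resolution I rely on is to reduce to segments whose endpoints lie in the convex domain $\dom F_1$, which are guaranteed to remain inside $C_1^s$, and then invoke the one-dimensional second-derivative test along each such segment. Everything else is a matter of handling the $+\infty$ values correctly.

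For Part~2) I would reduce to Part~1) by a direct Hessian computation. The chain rule gives $\nabla^2 F_1^s(\xv)=\Am^T\mathrm{diag}\parentheses{\xi_1''(\av_1^T\xv),\dots,\xi_m''(\av_m^T\xv)}\Am$ and $\nabla^2\Phi_{\Pcal_o}(\zv)=\Bm_o^T\mathrm{diag}\parentheses{\eta_1''(\bv_1^T\zv),\dots,\eta_p''(\bv_p^T\zv)}\Bm_o$; moreover the uniform bounds $0\leq\eta_j''\leq\kappa_j$ on all of $\Real$ force $\dom\eta_j=\Real$ and ensure $\Phi_{\Pcal_o}\in C^2(\Real^q)$, as Part~1) requires. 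Using $\xi_i''(\av_i^T\xv)\geq\gamma_i$ and $\eta_j''(\bv_j^T\zv)\leq\kappa_j$ together with the congruence monotonicity of the Loewner order (if $\Nm\succeq\Nm'$ then $\bm{C}^T\Nm\bm{C}\succeq\bm{C}^T\Nm'\bm{C}$), I obtain $\nabla^2 F_1^s(\xv)\succeq\Am^T\mathrm{diag}\parentheses{\gamma_1,\dots,\gamma_m}\Am$ and $\Xim^T\nabla^2\Phi_{\Pcal_o}(\zv)\Xim\preceq\Xim^T\Bm_o^T\mathrm{diag}\parentheses{\kappa_1,\dots,\kappa_p}\Bm_o\Xim$. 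Chaining these through the assumed inequality $\Am^T\mathrm{diag}\parentheses{\gamma_1,\dots,\gamma_m}\Am\succeq\Xim^T\Bm_o^T\mathrm{diag}\parentheses{\kappa_1,\dots,\kappa_p}\Bm_o\Xim$ recovers exactly the hypothesis of Part~1) for all $\xv\in C_1^s$ and all $\zv\in\Real^q$, so the conclusion follows. This part is essentially routine; the only point to watch is the correct direction of each Loewner inequality.
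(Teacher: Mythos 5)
Your proposal is correct and follows essentially the same route as the paper: both reduce to Theorem~\ref{thm:general_convexity_condition} via the decomposition $M_{\zv}=F_1^n+\bigl(F_1^s-\Phi_{\Pcal_o}(\Xim\cdot-\zv)\bigr)$, verify positive semidefiniteness of the Hessian of the smooth part, and in Part~2) chain the Loewner inequalities exactly as you do. The only difference is in how the Hessian test is justified: where you restrict to segments with endpoints in the convex set $\dom{F_1}\subseteq C_1^s$ (since $C_1^s$ need not be convex), the paper directly invokes its Fact~\ref{fact:twice_differentiable_convex} on $C_1^s$ --- your bookkeeping is, if anything, the more careful treatment of that single step.
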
 
\begin{proof}
See Appendix \ref{app:specific_convexity_condition}.
\end{proof}

We note that Corollary \ref{corol:convexity_condition} does not only embrace the overall-convexity conditions in prior arts \cite{selesnick2017,abe2020,alshabili2021}, but also extends the applicability of existing CP regularizers. In particular, as will be shown in Section \ref{sec:poisson_denoising}, Corollary \ref{corol:convexity_condition} permits the pSDC regularizer to be applied in more general observation systems such as measurement under Poisson noise.

\section{A DC-Type Solution Algorithm}
\label{sec:dc_algorithm}
Using the overall-convexity conditions provided in Section \ref{sec:overall_convexity_conditions}, we can find a proper tuning parameter $\Pcal_o$ such that the cost function $J_{p}(\cdot;\Pcal_o)$ in (\ref{eq:NRC_model_cost_function}) is convex. However, despite convexity, the minimization of the possibly nonsmooth cost function $J_{p}(\cdot;\Pcal_o)$ is not easy. This is because most algorithms for solving (convex or nonconvex) nonsmooth composite minimization problems (e.g., the forward-backward splitting algorithm \cite{attouch2013}) require the cost function to be the sum of simple functions whose gradients or proximity operators can be computed in closed-form, whereas the presence of the involved term $-(F_2\Box\Phi_{\Pcal_o})(\Xim \cdot)$ hinders the direct application of such algorithms to the proposed NRC model (\ref{eq:NRC_model}).

To solve special classes of the NRC model (\ref{eq:NRC_model}), previous works \cite{selesnick2017,abe2020,alshabili2021,yata2021} mostly transform (\ref{eq:example_of_NRC_model}) into monotone inclusion problems \cite{bauschke2017} and develop various proximal splitting \cite{combettes2011} type algorithms for solving it. However, these algorithms  usually make stringent assumptions on the component functions in (\ref{eq:example_of_NRC_model}) (see Sec. \ref{sec:comparison_with_proximal_methods} for details). To develop a unified solution algorithm which is applicable to a more general subclass of (\ref{eq:NRC_model}), in this paper, we exploit the DC structure of $J_{p}(\cdot;\Pcal_o)$ and propose a novel solution algorithm\footnote{The idea of applying DC programming to NRC models was previously reported in \cite{zhang2021,zhang2022}. But in this paper, the considered NRC model (\ref{eq:NRC_model}) is more general and the convergence analysis is improved (see Theorem \ref{thm:convergence_of_dc_algorithm}).} for minimizing it based on DC programming \cite{le2018}.

\subsection{Selected Elements of DC Programming}\label{subsec:selected_elements_of_DC_programming}
The goal of DC programming \cite{le2018} is to find a global minimizer of a (possibly nonsmooth nonconvex) DC function. Consider a DC program of the following form:
\begin{equation}\label{eq:DC_program}
\underset{\xv\in\Real^n}{\mathrm{minimize}}\; J(\xv)\coloneqq g(\xv)-h(\xv),
\end{equation}
where $g,h\in\Gamma_0(\Real)$ with $\dom{h}=\Real^n$. A standard approach for solving (\ref{eq:DC_program}) is the simplified DCA \cite{pham1997}.

\begin{algorithm}
\textbf{Initialization:} $k=0, \xv_0\in\Real^n$.\\
Repeat the following steps.\\
\textbf{Step 1:}\quad obtain $\uv_k\in\partial h\parentheses{\xv_k}$.

\textbf{Step 2:}\quad  compute $\xv_{k+1}$ by
\[\xv_{k+1}\in\underset{\xv\in\Real^n}{\argmin}\; g(\xv)-\innerp{\uv_{k}}{\xv},\]
\quad\quad\quad$\,\,$ and update $k\leftarrow k+1$.

\caption{The simplified DCA for solving (\ref{eq:DC_program})}\label{alg:simplified_dca}
\end{algorithm}

As summarized in Algorithm \ref{alg:simplified_dca}, in every iteration, the simplified DCA finds a subgradient $\uv_k\in\partial h(\xv_k)$. According to the definition of $\partial h(\xv_k)$ (cf. Section \ref{subsec:notation}), $\uv_k$ satisfies that
\begin{equation}\label{eq:h_is_minorant}
(\forall \xv\in\Real^n)\;\; h(\xv)\geq h_k(\xv)\coloneqq h(\xv_k)+\innerp{\uv_k}{\xv-\xv_k},
\end{equation}
hence the affine function $h_k$ is an affine minorant of $h$ which coincides with $h$ at $\xv_k$.
Substituting $h_k$ for $h$ in (\ref{eq:DC_program}), we define the convex surrogate cost function $J_k(\xv)\coloneqq g(\xv)-h_k(\xv)$. According to (\ref{eq:h_is_minorant}), the following holds:
\begin{equation*}
(\forall\xv\in\Real^n)\;\; J_k(\xv)\geq J(\xv),
\end{equation*}
thus $J_k(\xv)$ is a majorant of $J(\xv)$. The new estimate $\xv_{k+1}$ is selected as a global minimizer of this majorant\footnote{One may notice that the underlying idea of the simplified DCA is similar to that of the majorization minimization (MM) algorithm \cite{sun2017}. However, we remark that the simplified DCA is not an implementation of MM since it does not necessarily satisfy (A2.2) in \cite{sun2017}.}:
\begin{align*}
&\xv_{k+1} \in \underset{\xv\in\Real^n}{\argmin}\; J_k(\xv)= g(\xv)-h(\xv_k)-\innerp{\uv_{k}}{\xv-\xv_k}\\
&\iff\xv_{k+1}\in \underset{\xv\in\Real^n}{\argmin}\; g(\xv)-\innerp{\uv_{k}}{\xv}.
\end{align*}

Since $J$ in (\ref{eq:DC_program}) is possibly nonconvex and nonsmooth, the convergence of Algorithm \ref{alg:simplified_dca} to a global minimizer of (\ref{eq:DC_program}) is not guaranteed. Instead, it has been proved in \cite{pham1997} that the sequence generated by Algorithm \ref{alg:simplified_dca} subsequentially converges to a "critical point" of $g-h(=:J)$, i.e., a point $\xv_{*}$ satisfying
\[\partial g(\xv_{*})\cap\partial h(\xv_*)\neq \emptyset.\]
However, in general, being a critical point of $g-h$ is a necessary but insufficient condition for being a local minimizer of $J$. Especially, even if the cost function $J$ is overall-convex, a critical point of $g-h$ may not be the global minimizer of $J$; see the following 1d counter example.

\begin{example}\label{exp:critical_is_not_global_minimizer}
We define $g$ and $h$ as follows:
\begin{equation*}
g(x)\coloneqq \begin{cases}
|x|, & \textrm{if }-1\leq x\leq 1,\\
2|x|-1, & \textrm{otherwise}.
\end{cases}
\end{equation*}
\begin{equation*}
h(x)\coloneqq \begin{cases}
0, &\textrm{if }-1\leq x\leq 1,\\
|x|-1, & \textrm{otherwise}.
\end{cases}
\end{equation*}
Then one can verify the following:
\begin{enumerate}
\item $g,h\in\Gamma_0(\Real)$.
\item $\partial g(1)=[1,2]$, $\partial h(1)=[0,1]$.
\item $J(x)\coloneqq g(x)-h(x)=|x|$ is convex, and it has a unique global minimizer at $x=0$.
\end{enumerate}
From 2), we have $\partial g(1)\cap\partial h(1)=\braces{1}\neq \emptyset$, which implies that $x_*=1$ is a critical point of $g-h$. However, $x_*=1\neq 0$ is not the global minimizer of $J$.
\end{example}

\subsection{Derivation of the Algorithm}
Since $J_{p}(\cdot;\Pcal_o)$ is a DC function, (\ref{eq:NRC_model}) can be regarded as a special instance of (\ref{eq:DC_program}) with overall-convexity, which admits the following decomposition:
\begin{align}
g(\xv) &\coloneqq F_1(\xv),\nonumber\\
h(\xv) &\coloneqq \bar{F}_2(\xv;\Pcal_o)\coloneqq (F_2\Box \Phi_{\Pcal_o})(\Xim \xv).\label{eq:F2_bar}
\end{align}

Nonetheless, the simplified DCA cannot be applied directly to $J_{p}(\cdot;\Pcal_o)$. This is because $h$ does not have a closed form expression for its function value, not to mention its subdifferential, which poses a difficulty in computing $\uv_k\in\partial h(\xv_k)$. To overcome this obstacle, we show in Lemma \ref{lemma:nabla_bar_F2} that $\uv_k$ can be computed through solving a common convex program. Before presenting the result, we make the following assumption (note that Assumption \ref{assump:exact_inf_conv} implies Assumption \ref{assump:finite_inf_conv}).

\begin{assumption}\label{assump:exact_inf_conv}
The infimal convolution $F_2\Box\Phi_{\Pcal_o}$ is exact, i.e., for every $\sv\in\Real^q$, there exists $\zv_{\sv}\in\Real^q$ such that
\begin{equation}
F_2(\zv_{\sv})+\Phi_{\Pcal_o}(\sv-\zv_{\sv})=\inf_{\zv\in\Real^q}\braces{F_2(\zv)+\Phi_{\Pcal_o}(\sv-\zv)}.
\end{equation}
\end{assumption}

\begin{lemma}\label{lemma:nabla_bar_F2}
Suppose that Assumption \ref{assump:exact_inf_conv} holds. Then:
\begin{enumerate}
\item $\bar{F}_2(\cdot;\Pcal_o)$ in (\ref{eq:F2_bar}) is in $\Gamma_0(\Real^n)$ and is continuously differentiable.
\item let $\zv_{\xv}$ be the point satisfying:
\begin{equation}\label{eq:convex_program_in_nabla_F2}
\zv_{\xv}\in \underset{{\zv\in\Real^q}}{\argmin}\; F_2(\zv)+\Phi_{\Pcal_o}(\Xim\xv-\zv),
\end{equation}
then $\nabla \bar{F}_2(\xv;\Pcal_o)=\Xim^T\nabla \Phi_{\Pcal_o}(\Xim\xv-\zv_{\xv})$.
\end{enumerate}
\end{lemma}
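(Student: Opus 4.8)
The plan is to reduce everything to Proposition~\ref{prop:inf_conv_smooth_approximation} applied to the inner map $F_2\Box\Phi_{\Pcal_o}$, and then to propagate its conclusions through the linear map $\Xim$ by the chain rule. The first task is to check that $F_2\Box\Phi_{\Pcal_o}$ qualifies as the inf-conv smooth approximation of $F_2$ by $\Phi_{\Pcal_o}$ in the sense of Definition~\ref{def:inf_conv_approximation}. The hypotheses $F_2\in\Gamma_0(\Real^q)$ and $\Phi_{\Pcal_o}\in\Gamma_0(\Real^q)$ differentiable with $1/\sigma_{\Pcal_o}$-Lipschitz gradient are already part of the standing setup of (\ref{eq:NRC_model_cost_function}); the only remaining ingredient is finiteness of $(F_2\Box\Phi_{\Pcal_o})$ everywhere, which is exactly Assumption~\ref{assump:finite_inf_conv} and is implied by Assumption~\ref{assump:exact_inf_conv}. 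With this verified, Proposition~\ref{prop:inf_conv_smooth_approximation} applies and supplies: (i) $F_2\Box\Phi_{\Pcal_o}$ is differentiable with $1/\sigma_{\Pcal_o}$-Lipschitz gradient, (ii) $\nabla(F_2\Box\Phi_{\Pcal_o})(\sv)=\nabla\Phi_{\Pcal_o}(\sv-\zv_{\sv})$ whenever $\zv_{\sv}$ attains the defining infimum, and (iii) $F_2\Box\Phi_{\Pcal_o}\in\Gamma_0(\Real^q)$.

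For part 1), I would observe that $\bar{F}_2(\cdot;\Pcal_o)=(F_2\Box\Phi_{\Pcal_o})\circ\Xim$ is the composition of a finite-valued convex differentiable function with the linear map $\xv\mapsto\Xim\xv$. Composition of a convex function with a linear map preserves convexity, and since $F_2\Box\Phi_{\Pcal_o}$ is finite-valued and continuous (being differentiable on $\Real^q$), the composite is finite-valued and continuous on $\Real^n$, hence proper and lower semicontinuous; therefore $\bar{F}_2(\cdot;\Pcal_o)\in\Gamma_0(\Real^n)$. Continuous differentiability then follows because $\nabla(F_2\Box\Phi_{\Pcal_o})$ is Lipschitz (hence continuous), so by the chain rule $\nabla\bar{F}_2(\xv;\Pcal_o)=\Xim^T\nabla(F_2\Box\Phi_{\Pcal_o})(\Xim\xv)$ is a continuous function of $\xv$.

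For part 2), I would evaluate the chain-rule identity just obtained at $\sv=\Xim\xv$ and substitute the gradient formula from Proposition~\ref{prop:inf_conv_smooth_approximation}(2). Here the exactness (Assumption~\ref{assump:exact_inf_conv}) is used to guarantee that a minimizer $\zv_{\xv}$ of the program (\ref{eq:convex_program_in_nabla_F2}) exists, so that $\nabla(F_2\Box\Phi_{\Pcal_o})(\Xim\xv)=\nabla\Phi_{\Pcal_o}(\Xim\xv-\zv_{\xv})$ is valid; the claimed identity $\nabla\bar{F}_2(\xv;\Pcal_o)=\Xim^T\nabla\Phi_{\Pcal_o}(\Xim\xv-\zv_{\xv})$ then follows immediately.

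The only genuine subtlety is bookkeeping over which assumption supplies which hypothesis: finiteness (Assumption~\ref{assump:finite_inf_conv}, implied by exactness) is what makes $F_2\Box\Phi_{\Pcal_o}$ an inf-conv smooth approximation and hence differentiable, whereas exactness (Assumption~\ref{assump:exact_inf_conv}) is needed precisely so that the minimizer $\zv_{\xv}$ in (\ref{eq:convex_program_in_nabla_F2}) exists — without it the explicit gradient expression in part 2) would be vacuous. Everything else is a routine chain-rule argument through $\Xim$, so I do not anticipate any real obstacle beyond stating this dependence cleanly.
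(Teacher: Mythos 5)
Your proposal is correct and follows essentially the same route as the paper's proof: reduce everything to Proposition~\ref{prop:inf_conv_smooth_approximation} (using that Assumption~\ref{assump:exact_inf_conv} implies Assumption~\ref{assump:finite_inf_conv}) and then push the conclusions through the linear map $\Xim$ by the chain rule. You merely spell out details the paper leaves implicit, namely the composition-with-$\Xim$ argument for membership in $\Gamma_0(\Real^n)$ and the precise role of exactness in guaranteeing that the minimizer $\zv_{\xv}$ in (\ref{eq:convex_program_in_nabla_F2}) exists.
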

\begin{proof}
The result 1) follows from Proposition \ref{prop:inf_conv_smooth_approximation}-1). Applying the chain rule to Proposition \ref{prop:inf_conv_smooth_approximation}-2) yields the result 2).
\end{proof}

Lemma \ref{lemma:nabla_bar_F2}-2) indicates that we can obtain $\uv_k\in\partial h(\xv_k)\coloneqq \braces{\nabla \bar{F}_2(\xv_k;\Pcal_o)}$ via solving the convex program (\ref{eq:convex_program_in_nabla_F2}). Applying this idea to Algorithm \ref{alg:simplified_dca} yields\footnote{We should note that using Lemma \ref{lemma:nabla_bar_F2}, it is also possible to apply the nonconvex forward-backward splitting algorithm to (\ref{eq:NRC_model}) by setting $-(F_2\Box\Phi_{\Pcal_o})(\Xim\cdot)$ as $h$ and setting $F_1$ as $g$ (cf. \cite[Eq. (50)]{attouch2013}). However, one can verify that the resultant algorithm is essentially equivalent to the so-called proximal linearized DC algorithm \cite{souza2016}, which can be regarded as an alternative for the proposed Algorithm \ref{alg:proposed_dca} but does not make an essential difference.} the proposed Algorithm \ref{alg:proposed_dca}. Therefore, we can solve the involved NRC model (\ref{eq:NRC_model}) by solving a sequence of simpler convex programs (\ref{eq:alg1_step1}) and (\ref{eq:alg1_step2}).

\begin{algorithm}
\textbf{Initialization:} $k=0, \xv_0\in\Real^n.$\\
Repeat the following steps until convergence.\\
\textbf{Step 1:}\quad obtain $\zv_k$ by
\begin{equation}\label{eq:alg1_step1}
\zv_k\in\underset{\zv\in\Real^q}{\argmin}\; F_2(\zv)+\Phi_{\Pcal_o}(\Xim \xv_k-\zv),
\end{equation}
\quad\quad\quad$\,\,$ and compute $\uv_k=\Xim^T\nabla \Phi_{\Pcal_o}(\Xim \xv_k-\zv_k)$.

\textbf{Step 2:}\quad compute $\xv_{k+1}$ by
\begin{equation}\label{eq:alg1_step2}
\xv_{k+1}\in\underset{\xv\in\Real^n}{\argmin}\; F_1(\xv)-\innerp{\uv_k}{\xv},
\end{equation}
\quad\quad\quad$\,\,$ and update $k\leftarrow k+1$.

\caption{Proposed DC algorithm for solving (\ref{eq:NRC_model})}\label{alg:proposed_dca}
\end{algorithm}

\renewcommand{\arraystretch}{1.5}
\begin{table*}
\centering
\caption{Parameter settings for each algorithm.}\label{table:algorithm_parameters}
\begin{tabular}{|c|c|} 
 \hline
 Algorithm & Hyperparameters \\
 \hline\hline
 L1 (ISTA \cite[Sec. 2.1]{beck2009}) & $t_k\equiv 1.99/\norm{\Am^T\Am}_2$\\
 \hline
 GMC \cite[Prop. 15]{selesnick2017} & $\mu=1.99/\max\braces{1,\rho/(1-\rho)}\norm{\Am^T\Am}_2$\\
 \hline
 LiGME \cite[Alg. 1]{abe2020} &  $\kappa=1.01$, and $\sigma,\tau$ are defined as \cite[Footnote 7]{abe2020}\\
 \hline
 SSR \cite[Alg. 1]{alshabili2021} (Quad/Log) & $\alpha=0.99/2L$ with $L\coloneqq \sqrt{(1+4\rho^2)}\norm{\Am^T\Am}_2$ \\
 \hline
 Inner loop of Algorithm \ref{alg:proposed_dca} for (\ref{eq:alg1_step1}) (Quad/Log) &  $t_k\equiv 1.99\lambda/\rho\norm{\Am^T\Am}_2$, $\epsilon_1=10^{-3}$, $\delta_1=10^{-10}$\\
 \hline
 Inner loop of Algorithm \ref{alg:proposed_dca} for (\ref{eq:alg1_step2}) (Quad/Log) & $t_k\equiv 1.99/\norm{\Am^T\Am}_2$, $\epsilon_2=10^{-3}$, $\delta_2=10^{-10}$\\
 \hline
\end{tabular}
\end{table*}

\subsection{Convergence Properties}
As shown in Section \ref{subsec:selected_elements_of_DC_programming}, $\parentheses{\xv_k}_{k\in\Natural}$ generated by Algorithm \ref{alg:simplified_dca} is only guaranteed to converge to a critical point of $g-h$, which may not be a global minimizer even if the overall-convexity condition holds. Since Algorithm \ref{alg:proposed_dca} is essentially the same algorithm as Algorithm \ref{alg:simplified_dca}, one may concern that Algorithm \ref{alg:proposed_dca} yields worse convergence guarantee in comparison with existing proximal splitting type methods \cite{selesnick2017,abe2020,alshabili2021,yata2021}. 

Fortunately, the cost function $J_{p}(\cdot;\Pcal_o)$ to be solved by Algorithm \ref{alg:proposed_dca} has a favourable property in addition to the overall-convexity, that is, the subtrahend function $\bar{F}_2(\cdot;\Pcal_o)$ is differentiable. Exploiting this property, we have proved in Theorem \ref{thm:convergence_of_dc_algorithm} that every critical point of $F_1(\cdot)-\bar{F}_2(\cdot;\Pcal_0)$ is a global minimizer of $J_{p}(\cdot;\Pcal_o)$ (cf. Appendix \ref{app:convergence_of_DC_algorithm}), whereby we can prove the convergence of Algorithm \ref{alg:proposed_dca} to a global minimizer of $J_{p}(\cdot;\Pcal_o)$. Accordingly, the reliability of the proposed Algorithm \ref{alg:proposed_dca} is ensured.

Before presenting the result, since $J_{p}(\cdot;\Pcal_o)$ is supposed to be the cost function of a signal estimation problem, we assume that the solution set of (\ref{eq:NRC_model}) is nonempty and bounded.
\begin{assumption}\label{assump:boundedness_of_argmin}
The set of global minimizers of $J_{p}(\cdot;\Pcal_o)$, i.e., $\argmin_{\xv\in\Real^n}\; J_{p}(\xv;\Pcal_o)$ is nonempty and bounded.
\end{assumption}
The convergence properties of Algorithm \ref{alg:proposed_dca} is as follows.

\begin{theorem}\label{thm:convergence_of_dc_algorithm}
Suppose that Assumption \ref{assump:finite_inf_conv}, \ref{assump:exact_inf_conv}, \ref{assump:boundedness_of_argmin} and the overall-convexity condition specified in Theorem \ref{thm:general_convexity_condition} hold. Let $(\xv_k)_{k\in\mathbb{N}}$ and $(\uv_k)_{k\in\Natural}$ be sequences generated by Algorithm \ref{alg:proposed_dca}, and let
\[\alpha=\min_{\xv\in\Real^n}J_{p}(\xv;\Pcal_o),\]
then the following holds:
\begin{enumerate}
\item for every $k\in\Natural$, $J_{p}(\xv_{k+1};\Pcal_o)\leq J_{p}(\xv_k;\Pcal_o)$.

\item $(\xv_k)_{k\in\Natural}$ and $(\uv_k)_{k\in\Natural}$ are bounded.

\item every limit point of $(\xv_k)_{k\in\Natural}$, denoted by $\bar{\xv}$, is a global minimizer of $J_{p}(\cdot;\Pcal_o)$, i.e.,
\begin{equation*}
J_{p}(\bar{\xv};\Pcal_o)=\alpha.
\end{equation*}

\item $\lim_{k\to +\infty}J_{p}(\xv_k;\Pcal_o)=\alpha$.
\end{enumerate}
\end{theorem}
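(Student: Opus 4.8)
The plan is to prove the four items in order, with items 3) and 4) resting on a single structural observation: since $J_{p}(\cdot;\Pcal_o)$ is convex (Theorem \ref{thm:general_convexity_condition}) while its subtrahend $\bar{F}_2(\cdot;\Pcal_o)$ is differentiable (Lemma \ref{lemma:nabla_bar_F2}), \emph{critical point of} $F_1-\bar{F}_2(\cdot;\Pcal_o)$ and \emph{global minimizer of} $J_{p}(\cdot;\Pcal_o)$ will turn out to be the same notion. I would first record that Algorithm \ref{alg:proposed_dca} is exactly the simplified DCA with $g=F_1$, $h=\bar{F}_2(\cdot;\Pcal_o)$: by Lemma \ref{lemma:nabla_bar_F2} the vector $\uv_k=\nabla\bar{F}_2(\xv_k;\Pcal_o)$ computed in Step 1 lies in $\partial h(\xv_k)$ (a singleton), and Step 2 is the surrogate minimization, whose Fermat condition gives $\uv_k\in\partial F_1(\xv_{k+1})$. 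Item 1) then follows from the generic DC descent estimate: combining the subgradient inequality for the convex $\bar{F}_2(\cdot;\Pcal_o)$ at $\xv_k$ with the Step 2 optimality yields $J_{p}(\xv_{k+1};\Pcal_o)\le J_{p}(\xv_k;\Pcal_o)$. For item 2) I would invoke Assumption \ref{assump:boundedness_of_argmin}: for a function in $\Gamma_0(\Real^n)$, a nonempty bounded minimizer set is equivalent to boundedness of \emph{all} lower level sets; since item 1) confines $(\xv_k)_{k\in\Natural}$ to $\levelset_{\le J_{p}(\xv_0;\Pcal_o)}J_{p}(\cdot;\Pcal_o)$, this gives boundedness of $(\xv_k)_{k\in\Natural}$, and boundedness of $(\uv_k)_{k\in\Natural}$ follows because $\nabla\bar{F}_2(\cdot;\Pcal_o)$ is (Lipschitz) continuous, hence bounded on the compact closure of $(\xv_k)_{k\in\Natural}$.

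For items 3) and 4), let $\beta\coloneqq\lim_k J_{p}(\xv_k;\Pcal_o)$ (it exists by item 1), with $\beta\ge\alpha$) and fix any limit point $\bar{\xv}=\lim_j\xv_{k_j}$. The key device is the convex surrogate $J_k(\xv)\coloneqq F_1(\xv)-\bar{F}_2(\xv_k;\Pcal_o)-\innerp{\uv_k}{\xv-\xv_k}$ of Section \ref{subsec:selected_elements_of_DC_programming}, which majorizes $J_{p}(\cdot;\Pcal_o)$, is minimized by $\xv_{k+1}$, and satisfies $J_{p}(\xv_{k+1};\Pcal_o)\le J_k(\xv_{k+1})=\min J_k\le J_k(\xv_k)=J_{p}(\xv_k;\Pcal_o)$; squeezing the outer terms gives $\min J_{k_j}\to\beta$. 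Because $\xv_{k_j}\to\bar{\xv}$, continuity of both $\bar{F}_2(\cdot;\Pcal_o)$ and $\nabla\bar{F}_2(\cdot;\Pcal_o)$ gives, for every $\xv$ (with both sides $+\infty$ off $\dom F_1$), $J_{k_j}(\xv)\to\bar{J}(\xv)\coloneqq F_1(\xv)-\bar{F}_2(\bar{\xv};\Pcal_o)-\innerp{\bar{\uv}}{\xv-\bar{\xv}}$, where $\bar{\uv}\coloneqq\nabla\bar{F}_2(\bar{\xv};\Pcal_o)$. Passing to the limit in $J_{k_j}(\xv)\ge\min J_{k_j}$ yields $\bar{J}(\xv)\ge\beta$ for all $\xv$, i.e. $\inf\bar{J}\ge\beta$; on the other hand, lower semicontinuity of $J_{p}(\cdot;\Pcal_o)$ gives $\bar{J}(\bar{\xv})=J_{p}(\bar{\xv};\Pcal_o)\le\beta$. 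The sandwich forces $\bar{J}(\bar{\xv})=\inf\bar{J}=\beta$, so $\bar{\xv}$ minimizes the convex function $\bar{J}$, whence $\mathbf{0}\in\partial\bar{J}(\bar{\xv})=\partial F_1(\bar{\xv})-\bar{\uv}$, i.e. $\bar{\uv}\in\partial F_1(\bar{\xv})$ --- the critical-point condition at $\bar{\xv}$.

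To close, I would upgrade criticality to global optimality. Writing $F_1=J_{p}(\cdot;\Pcal_o)+\bar{F}_2(\cdot;\Pcal_o)$ and applying the subdifferential sum rule (legitimate since $\bar{F}_2(\cdot;\Pcal_o)$ is finite and continuous) gives $\partial F_1(\bar{\xv})=\partial J_{p}(\bar{\xv};\Pcal_o)+\nabla\bar{F}_2(\bar{\xv};\Pcal_o)$, so $\bar{\uv}=\nabla\bar{F}_2(\bar{\xv};\Pcal_o)\in\partial F_1(\bar{\xv})$ is equivalent to $\mathbf{0}\in\partial J_{p}(\bar{\xv};\Pcal_o)$. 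By convexity $\bar{\xv}$ is then a global minimizer with $J_{p}(\bar{\xv};\Pcal_o)=\alpha$, which is item 3). Since the same computation also gave $J_{p}(\bar{\xv};\Pcal_o)=\beta$, we conclude $\beta=\alpha$, i.e. $\lim_k J_{p}(\xv_k;\Pcal_o)=\alpha$, which is item 4).

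The main obstacle is that classical DCA only guarantees subsequential convergence to a critical point, and the textbook route to that conclusion (showing $\norm{\xv_{k+1}-\xv_k}_2\to0$) requires strong convexity of $g$ or $h$, which is unavailable here; one must reach the critical-point condition at a limit point \emph{without} controlling $\xv_{k+1}-\xv_k$, and the surrogate-limit sandwich above is precisely what circumvents this, after which overall-convexity converts ``critical'' into ``global'' at no cost. A secondary point I would verify is well-definedness of the iterates: each surrogate $J_k$ inherits coercivity from $J_{p}(\cdot;\Pcal_o)$ (because $J_k\ge J_{p}(\cdot;\Pcal_o)$ and the latter is coercive by Assumption \ref{assump:boundedness_of_argmin}), so the Step 2 argmin is attained with $\xv_k\in\dom F_1$ for $k\ge1$, while exactness of $F_2\Box\Phi_{\Pcal_o}$ (Assumption \ref{assump:exact_inf_conv}) guarantees the Step 1 minimizer $\zv_k$ exists.
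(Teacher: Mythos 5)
Your proof is correct, and its decisive step---upgrading criticality to global optimality via the identity $\partial F_1(\bar{\xv})=\partial J_{p}(\bar{\xv};\Pcal_o)+\braces{\nabla \bar{F}_2(\bar{\xv};\Pcal_o)}$ (the sum rule, legitimate since $\dom{\bar{F}_2(\cdot;\Pcal_o)}=\Real^n$) followed by Fermat's rule---is exactly the paper's key step, as are your arguments for items 1) and 2). Where you genuinely diverge is in how the critical-point inclusion $\nabla\bar{F}_2(\bar{\xv};\Pcal_o)\in\partial F_1(\bar{\xv})$ is reached: the paper simply cites the classical DCA convergence theorem of Pham Dinh--Le Thi (Fact \ref{fact:properties_of_DCA}), which, given boundedness of $(\xv_k)_{k\in\Natural}$ and $(\uv_k)_{k\in\Natural}$, hands over a limit point $\bar{\uv}\in\partial F_1(\bar{\xv})\cap\partial\bar{F}_2(\bar{\xv};\Pcal_o)$ as well as the value convergence $\lim_{k\to+\infty}J_{p}(\xv_k;\Pcal_o)=J_{p}(\bar{\xv};\Pcal_o)$ used in item 4). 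You instead re-derive both facts self-containedly through the surrogate sandwich $J_{p}(\xv_{k+1};\Pcal_o)\leq \min J_k\leq J_{p}(\xv_k;\Pcal_o)$ and the pointwise limit $J_{k_j}\to\bar{J}$, which works precisely because $\bar{F}_2(\cdot;\Pcal_o)$ is continuously differentiable, so that $\uv_{k_j}=\nabla\bar{F}_2(\xv_{k_j};\Pcal_o)\to\nabla\bar{F}_2(\bar{\xv};\Pcal_o)$ automatically and no extraction of a further subsequence or graph-closedness argument for $\partial\bar{F}_2$ is needed. What each approach buys: the paper's route is shorter and stays aligned with the generic DC-programming literature; yours is self-contained, exploits the smoothness of the subtrahend that the generic theorem ignores, obtains item 4) for free from the sandwich ($\beta=J_{p}(\bar{\xv};\Pcal_o)=\alpha$) rather than by a second appeal to Fact \ref{fact:properties_of_DCA}, and additionally settles well-definedness of the iterates (attainment in (\ref{eq:alg1_step1}) by Assumption \ref{assump:exact_inf_conv}, attainment in (\ref{eq:alg1_step2}) by coercivity of the surrogate inherited from $J_{p}(\cdot;\Pcal_o)$), a point the paper leaves implicit. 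Two cosmetic touch-ups: before concluding $\beta=\alpha$ in item 4), state explicitly that a limit point exists (Bolzano--Weierstrass applied to the bounded sequence from item 2)), and note $\beta\geq\alpha>-\infty$ so the monotone limit defining $\beta$ is finite.
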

\begin{proof}
See Appendix \ref{app:convergence_of_DC_algorithm}.
\end{proof}

One may be interested in the convergence rate of Algorithm \ref{alg:proposed_dca}. We note that if $J_{p}(\cdot;\Pcal_o)$ is a very simple cost function such that its so-called Lojasiewicz exponent $\theta$ is known, the convergence rate analysis is possible. More precisely, let $\xv_\infty$ be the limit point of $\parentheses{\xv_k}_{k\in\Natural}$, then if $\theta\in(1/2,1)$, the convergence of $\norm{\xv_k-\xv_{\infty}}_2$ is sublinear; if $\theta\in(0,1/2]$, the convergence of $\norm{\xv_k-\xv_{\infty}}_2$ is linear (see \cite[Thm. 3.3]{le2018subanalytic} for details). However, we note that in general the Lojasiewicz exponent of a given function is unknown, hence it is difficult to determine the convergence rate of Algorithm \ref{alg:proposed_dca} in practice.
\subsection{Implementation Details}
In practice, when implementing Algorithm \ref{alg:proposed_dca}, one usually needs to invoke some iterative algorithms to solve the convex subproblems (\ref{eq:alg1_step1}) and (\ref{eq:alg1_step2}), which leads to two inner loops. With respect to these inner loops, several points need to be discussed; see the following remarks.

\begin{remark}[Stopping criteria of the inner loops]\label{remark:termination_rules}
In general, it takes an infinite number of inner iterations to obtain an exact solution to (\ref{eq:alg1_step1}) or (\ref{eq:alg1_step2}), which leads to non-terminating inner loops. To resolve this impracticality, in practice we often adopt certain stopping criteria to obtain high-quality inexact solutions to the subproblems. Let $\zv_{k,i}$ be the $i$th inner loop estimate for $\zv_k$ in (\ref{eq:alg1_step1}), and let $\xv_{k+1,j}$ be the $j$th inner loop estimate for $\xv_{k+1}$ in (\ref{eq:alg1_step2}). In this paper, we recommend the following stopping criteria which require the update caused by the current inner iteration to be sufficiently small compared to the last inner estimate. More precisely, we suggest terminating the inner loop for (\ref{eq:alg1_step1}) and set $\zv_k\coloneqq \zv_{k,j+1}$ if 
\begin{equation*}
\norm{\zv_{k,j+1}-\zv_{k,j}}_2\leq \epsilon_1(\norm{\zv_{k,j}}_2+\delta_1),
\end{equation*}
and we suggest terminating the inner loop for (\ref{eq:alg1_step2}) and set $\xv_{k+1}\coloneqq \xv_{k+1,j+1}$ if 
\begin{equation*}
\norm{\xv_{k+1,j+1}-\xv_{k+1,j}}_2\leq \epsilon_2(\norm{\xv_{k+1,j}}_2+\delta_2),
\end{equation*}
where $\epsilon_1,\epsilon_2>0$, $\delta_1,\delta_2\geq 0$.
\end{remark}

Applying the stopping criteria above leads to an inexact version of Algorithm \ref{alg:proposed_dca}. We note that although the convergence of this inexact DC algorithm has been verified by extensive experiments (cf. Section \ref{sec:numerical_experiments}), it is difficult to prove it theoretically. Some recent works \cite{souza2016,deoliveira2019,lu2019} have proposed other inexact versions of DC algorithms, with respect to which the convergence to critical points can be guaranteed. Nevertheless, the stopping criteria proposed in these studies usually do not possess implementability (i.e., the stopping criteria should be easily verifiable) and achievability (i.e., the stopping criteria should be satisfied within finite inner iterations) at the same time, hence are not qualified for solving (\ref{eq:NRC_model}).

Fortunately, during the reviewing process of this paper, we have developed a novel inexact DC algorithm \cite{zhang2023} which can tackle the aforementioned difficulties for a certain class of DC functions \footnote{To be more precise, we mean the class of DC functions whose minuend part can be decomposed as the pointwise maximum of finitely many convex smooth functions.}. However, to extend this new algorithm to a more general class of the NRC model (\ref{eq:NRC_model}) requires more effort, and we would like to leave it for future research.

\begin{remark}[Initialization of the inner loops]
Every time we invoke an inner iterative algorithm for solving (\ref{eq:alg1_step1}) (or (\ref{eq:alg1_step2})), we need to reassign the initial guess $\zv_{k,0}$ (or $\xv_{k+1,0}$). In this paper, to make full use of the past iterations, we recommend adopting the warm start strategy, i.e., we set $\zv_{k,0} \coloneqq \zv_{k-1}$ and $\xv_{k+1,0}\coloneqq \xv_{k}$ for every $k\geq 0$. In this case, the initialization of the inner loops ultimately reduces to the initialization of $\zv_{-1}$ and $\xv_0$. As will be shown in Section \ref{sec:comparison_of_algorithms}, in the standard sparse recovery problem, assigning $\zv_{-1}\coloneqq \mathbf{0}_p$ and $\xv_{0}\coloneqq \mathbf{0}_n$ leads to empirically fast convergence, of which an intuitive interpretation is given in Remark \ref{remark:zero_initialization}. 
\end{remark}

\subsection{Comparison with Proximal Splitting Type Algorithms}\label{sec:comparison_with_proximal_methods}
Conventional studies \cite{selesnick2017,abe2020,alshabili2021,yata2021} which adopt proximal splitting type algorithms usually assume the following in (\ref{eq:example_of_NRC_model}):
\begin{enumerate}
\item the data fidelity $F_0$ is quadratic.
\item the kernel functions $\psi_{1}^{(i)},\psi_{2}^{(i)}$ in $\Psi_{p}^{(i)}$ are proximable functions or their compositions with linear operators.
\item the constraint $C_0$ is of split feasibility type \cite{censor2005}, i.e., $C_0$ can be rewritten as
\begin{equation*}
C_0\coloneqq \braces{\xv\in\Real^n\mid \Am_i\xv\in C_i, \textrm{ for } i=1,\dots, s}.
\end{equation*}
where for $i=1,\dots,s$, $\Am_i$ is a linear operator and $C_i$ is a ``simple" closed convex set (by ``simple", we mean the projection onto $C_i$ can be computed to high precision efficiently).
\end{enumerate}

Due to these assumptions, existing proximal splitting type algorithms \cite{selesnick2017,abe2020,alshabili2021,yata2021} are applicable to a limited subclass of the proposed NRC model (\ref{eq:NRC_model}).

In contrast, Algorithm \ref{alg:proposed_dca} is applicable to (\ref{eq:NRC_model}) as long as (\ref{eq:alg1_step1}) and (\ref{eq:alg1_step2}) can be solved by some iterative algorithms. In particular, substituting (\ref{eq:unification_begin}-\ref{eq:unification_end}) into (\ref{eq:alg1_step1}) and (\ref{eq:alg1_step2}) yields Algorithm \ref{alg:proposed_dca_special} for solving (\ref{eq:example_of_NRC_model}), and one can verify that if the assumptions made in proximal splitting type algorithms hold, then (\ref{eq:alg3_step1}) and (\ref{eq:alg3_step2}) can be solved efficiently via primal-dual splitting algorithms \cite{combettes2012,condat2013}. Moreover, various advanced optimization techniques can be easily incorporated into the implementation of Algorithm \ref{alg:proposed_dca_special} to handle difficult terms or to yield faster convergence, for example:
\begin{enumerate}
\item Nesterov acceleration \cite{nesterov1983} or Anderson acceleration \cite{walker2011} can help achieve faster convergence of the inner loops.

\item Line search techniques such as Wolfe conditions \cite{nocedal2006} can be used to handle differentiable data fidelity term $F_0$ whose Lipschitz constant of the gradient is unknown.

\item Adopting interior point methods \cite[Ch. 11]{boyd2004} as the inner algorithm for (\ref{eq:alg3_step2}), in principle we can deal with certain non-split-feasibility type constraints such as the following type:
\begin{equation*}
C_0\coloneqq \braces{\xv\in\Real^n\mid f_i(\xv)\leq c_i, i=1,\dots, c},
\end{equation*}
where $f_i$ is convex and twice continuously differentiable.
\end{enumerate}

In addition, proximal splitting type algorithms \cite{abe2020,yata2021} deal with the convex constraint and multiple regularizers by lifting the original problem to a higher dimensional space. In contrast, Algorithm \ref{alg:proposed_dca_special} does not handle constraints and multiple regularizers directly in a single monotone inclusion problem, but can deal with them in the simpler convex subproblem (\ref{eq:alg3_step1}) and (\ref{eq:alg3_step2}). Hence by employing proper inner algorithms (e.g., three-operator splitting scheme \cite{davis2017}), Algorithm \ref{alg:proposed_dca_special} does not necessarily increase the problem dimension.

\begin{figure*}
\begin{multicols}{2}
\centering
\includegraphics[height=5cm]{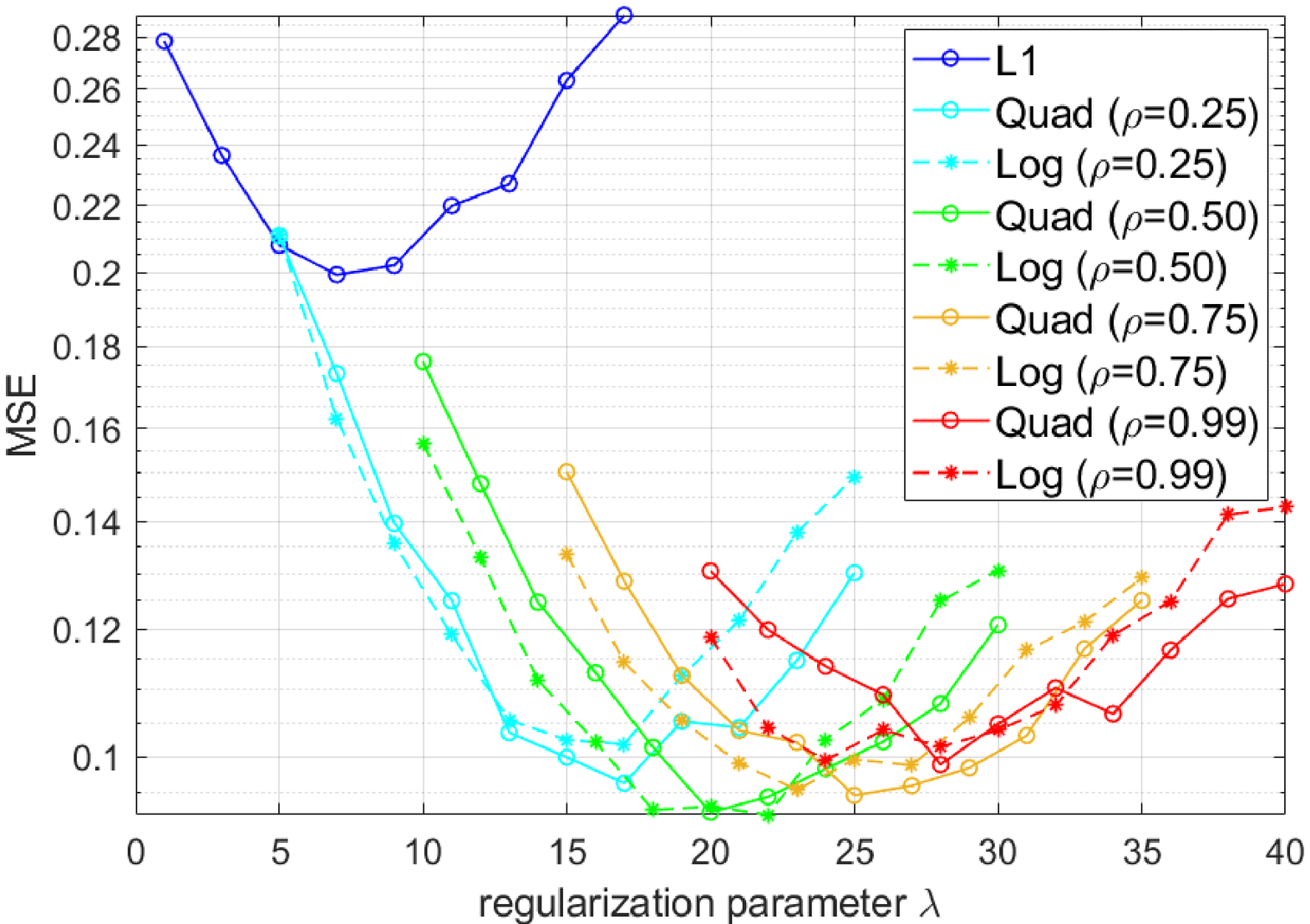}
\caption{MSE vs hyperparameters.}\label{fig:MSE_vs_lambda}
\par
\centering
\includegraphics[height=5cm]{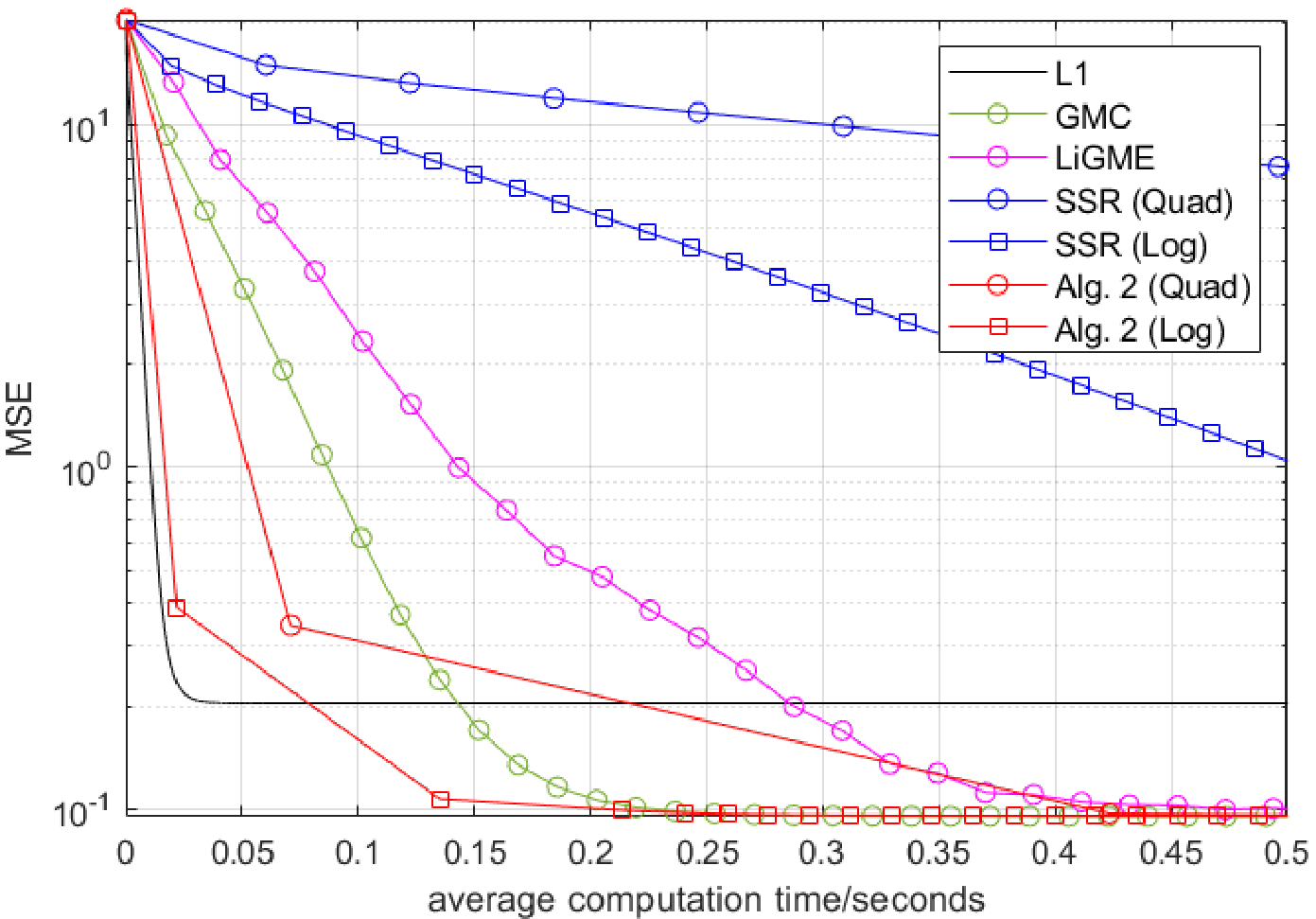}
\caption{Performance of each algorithm.}\label{fig:algorithm_comparison}
\par
\end{multicols}
\end{figure*}

\begin{figure*}[!t]
\centering
\includegraphics[width=\linewidth]{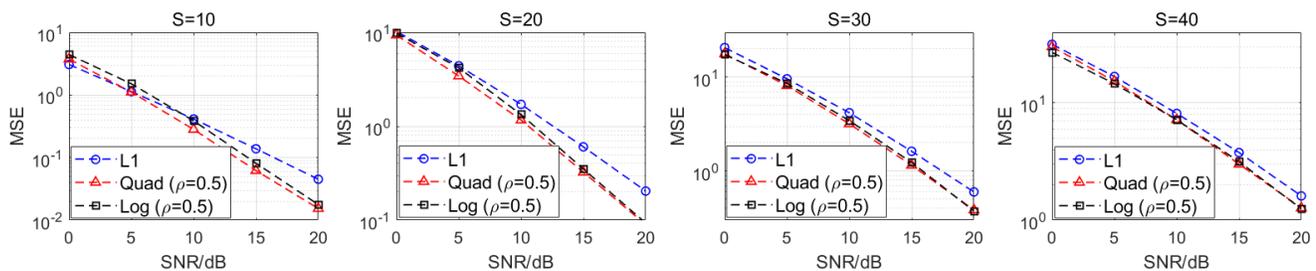}
\caption{MSE vs SNR with different sparseness level $S$.}\label{fig:MSE_vs_SNR}
\end{figure*}

\begin{algorithm}
\textbf{Initialization:} $k=0, \xv_0\in\Real^n.$\\
Repeat the following steps until convergence.\\
\textbf{Step 1:}\quad for $i=1,\dots,r$, obtain $\zv_k^{(i)}$ by
\begin{equation}\label{eq:alg3_step1}
\zv_k^{(i)}\in\underset{\zv\in\Real^{q_i}}{\argmin}\; \psi_{2}^{(i)}(\zv)+\phi^{(i)}_{\Pcal_{o,i}}(\Mm_i \xv_k-\zv),
\end{equation}
\quad\quad\quad$\,\,$ and compute
\[\uv_k=\sum_{i=1}^r \Mm_i^T\nabla \phi^{(i)}_{\Pcal_{o,i}}\parentheses{\Mm_i \xv_k-\zv_k^{(i)}}.\]

\textbf{Step 2:}\quad compute $\xv_{k+1}$ by
\begin{equation}\label{eq:alg3_step2}
\xv_{k+1}\in\underset{\xv\in C_0}{\argmin}\; F_0(\xv)+\sum_{i=1}^r\lambda_{i} \psi_{1}^{(i)}(\xv)-\innerp{\uv_k}{\xv},
\end{equation}
\quad\quad\quad$\,\,$ and update $k\leftarrow k+1$.

\caption{Proposed DC algorithm for solving (\ref{eq:example_of_NRC_model})}\label{alg:proposed_dca_special}
\end{algorithm}

\section{Numerical Experiments}
\label{sec:numerical_experiments}

This section evaluates the proposed pSDC regularizer and the proposed DC algorithm for solving (\ref{eq:NRC_model}) in numerical experiments. All experiments were performed using MATLAB (R2020b) on a computer with Intel(R) Core(TM) i7-6700T CPU 2.80, 16GB (RAM), under Windows 10, 64Bits.

\subsection{Standard Sparse Recovery}\label{sec:experiments:sparse_recovery}
We first conduct numerical experiments in a scenario of standard sparse recovery problems. Through this simplest example, we can understand many properties of CP regularizers and NRC models, and we can verify their superiority over conventional convex regularization models.

We generate the synthetic data as follows: the sparse signal to be estimated $\xv_{\star}\in\Real^{1000}$ is generated by "sprandn" function in MATLAB, where $S$ out of 1000 components are nonzero. The observed signal is $\yv=\Am\xv_\star+\bm{\epsilon}$, where the entries of $\Am\in\Real^{200\times 1000}$ follow the standard normal distribution, $\bm{\epsilon}$ is additive white Gaussian noise. The signal-to-noise ratio (SNR) is defined as $\textrm{SNR}\coloneqq 20\log_{10}\parentheses{\norm{\Am\xv_\star}_2/\norm{\bm{\epsilon}}_2} \;(\textrm{dB})$. We estimate $\xv_\star$ from $\yv$ by solving the following NRC model:
\begin{align}\label{eq:experiment:sparse_recovery}
\underset{\xv\in\Real^{1000}}{\mathrm{minimize}}\;\;  &\frac{1}{2}\norm{\yv-\Am\xv}^2_2+\lambda\Psi_{p}(\xv;\rho)\nonumber\\
\coloneqq &  \frac{1}{2}\norm{\yv-\Am\xv}^2_2+\lambda\norm{\xv}_1-\lambda(\norm{\cdot}_1\Box\phi_\rho)(\xv),
\end{align}
where the sparseness-promoting pSDC regularizer $\Psi_{p}(\cdot;\rho)$ is constructed by setting $\psi_1(\cdot)=\psi_2(\cdot)\coloneqq \norm{\cdot}_1, \Mm\coloneqq\Imat_n$ in (\ref{eq:cp_regularizer}), and we consider two types of smoothing functions $\phi_{\rho}$:
\begin{enumerate}
\item the quadratic function (Quad): $\phi_{\rho}(\cdot)\coloneqq \frac{1}{2}\norm{\Bm_\rho\cdot}^2_2$,

\item the logarithmic function (Log): $\phi_{\rho}(\cdot)\coloneqq \Phi(\Bm_\rho\cdot)$ with 
\[\Phi(\zv)\coloneqq \sum_{i=1}^{200} \lvert z_i \rvert-\log_e(\lvert z_i \rvert +1),\]
\end{enumerate}
where $\Bm_\rho=\sqrt{\rho/\lambda}\Am$ is the steering matrix. Applying the transformation introduced in Section \ref{subsec:NRC_model}, one can rewrite (\ref{eq:experiment:sparse_recovery}) in the form of (\ref{eq:NRC_model}), and one can verify by Corollary \ref{corol:convexity_condition} that if $\rho\in[0,1]$, then (\ref{eq:experiment:sparse_recovery}) is convex \footnote{For the "Log" smoothing function, one can verify that $\phi_{\rho}(\zv)=\sum_{j=1}^{200} \eta(\bv_{\rho,j}^T\zv)$, where $\eta(z)\coloneqq |z|-\log_e(|z|+1)$, $\bv_{\rho,j}$ is the $j$th row vector of $\Bm_{\rho}$. Notice that $\eta''(z)\leq 1$ for every $z\in\Real$ (see \cite[Table III]{lange1990}), Corollary \ref{corol:convexity_condition}-2) can be applied to derive the overall-convexity condition.}. 

For every fixed group of hyperparameters (i.e., the smoothing function type, the values of $\lambda$ and $\rho$) of (\ref{eq:experiment:sparse_recovery}), we repeat 500 Monte Carlo runs, and evaluate the quality of the recovered signals via the mean square error (MSE):
\begin{equation*}
\mathrm{MSE}\coloneqq \sum_{t=1}^{500}\norm{\xv_\mathrm{est}^{(t)}-\xv_\star^{(t)}}^2_2,
\end{equation*}
where $\xv_\mathrm{est}^{(t)}$ and $\xv_\star^{(t)}$ are respectively the recovered signal and the true signal in the $t$th trial.

Fig. \ref{fig:MSE_vs_lambda} shows the dependency of MSE on the hyperparameters, where we set the sparseness level $S$ of $\xv_{\star}$ to be $20$, and set SNR to be 20 $\mathrm{dB}$. Especially, we note that for both smoothing functions "Quad" and "Log", setting $\rho=0$ reproduces the conventional $l_1$-regularization model, which is depicted as the "L1" curve and is considered as a baseline for sparse recovery. From Fig. \ref{fig:MSE_vs_lambda}, one can verify the superior performance of CP regularizer over conventional $l_1$-norm regardless of the smoothing function type and the value of $\rho$. 

Moreover, Fig. \ref{fig:MSE_vs_lambda} dispels a possible concern on CP regularizers: since the overall-convexity of the NRC model (\ref{eq:experiment:sparse_recovery}) requires the concave part $\lambda(\norm{\cdot}_1\Box\phi_{\rho})(\xv)$ to be overpowered by the "partially strongly convex" term $\frac{1}{2}\norm{\yv-\Am\xv}^2_2$ (cf. Section \ref{sec:GMC} and \ref{sec:overall_convexity_conditions}), one may concern that the scale of $\lambda(\norm{\cdot}_1\Box\phi_{\rho})(\xv)$ is limited to a very weak level and cannot lead to significant performance improvement. However, Fig. \ref{fig:MSE_vs_lambda} shows that for both smoothing functions and all nonzero values of $\rho$, (\ref{eq:experiment:sparse_recovery}) consistently leads to a MSE gain of around 3dB (using the optimal regularization weight $\lambda$), which verifies the promising advantage of CP regularizers.

Fig. \ref{fig:MSE_vs_SNR} shows MSE versus SNR with different sparseness level $S$. For the NRC model (\ref{eq:experiment:sparse_recovery}), we set $\rho=0.5$ for both smoothing functions. For each regularization model, we set the regularization weight as
\begin{equation*}
\lambda\coloneqq \lambda_\star\times 10^{(20-\mathrm{SNR})/20},
\end{equation*}
where $\lambda_\star$ is the optimal regularization weight of the corresponding regularization model in Fig. \ref{fig:MSE_vs_lambda}. From Fig. 4, one can verify that the MSE gain of the NRC model (\ref{eq:experiment:sparse_recovery}) compared to the $l_1$-regularization model increases with SNR and decreases with sparseness level $S$.

%

\subsection{Performance Comparison of Algorithms}
\label{sec:comparison_of_algorithms}
Under the same experimental settings as Fig. \ref{fig:MSE_vs_lambda}, we compare the performance of the proposed Algorithm \ref{alg:proposed_dca} with existing proximal splitting type algorithms. When choosing the quadratic smoothing function, (\ref{eq:experiment:sparse_recovery}) amounts to the GMC model, hence can be solved by Algorithm \ref{alg:proposed_dca} and the algorithms proposed for the GMC \cite{selesnick2017}, LiGME \cite{abe2020} and SSR \cite{alshabili2021} models. On the other hand, when choosing the logarithmic smoothing function, (\ref{eq:experiment:sparse_recovery}) can be solved by Algorithm \ref{alg:proposed_dca} and the algorithm proposed for the SSR \cite{alshabili2021} model. We solve the $l_1$-regularization model by ISTA \cite{combettes2005,beck2009}. 

For the NRC model (\ref{eq:experiment:sparse_recovery}), we set $\rho=0.5$ for both smoothing functions. For every regularization model, we adopt the optimal value of $\lambda$ shown in Fig. \ref{fig:MSE_vs_lambda}. In Algorithm \ref{alg:proposed_dca}, we employ ISTA \cite{combettes2005,beck2009} as the inner iterative algorithm for solving (\ref{eq:alg1_step1}) and (\ref{eq:alg1_step2}), and we use the stopping criteria described in Remark \ref{remark:termination_rules}. All algorithms are initialized with zero vectors. The other parameter settings are summarized in Table \ref{table:algorithm_parameters}.

Fig. \ref{fig:algorithm_comparison} shows the dependency of MSE of each algorithm on the average computation time. In the figure, each marker denotes 10 iterations of proximal splitting type algorithms and 1 outer iteration of Algorithm \ref{alg:proposed_dca}. The results demonstrate that for Algorithm \ref{alg:proposed_dca}, the logarithmic smoothing function leads to faster convergence than the quadratic one, which reveals a potential merit of using different smoothing functions. Moreover, despite the generality and double-loop structure of Algorithm \ref{alg:proposed_dca}, it achieves empirically fast convergence. 

Especially, for both smoothing functions, Algorithm \ref{alg:proposed_dca} is able to obtain a satisfactory estimate with merely two iterations, which can serve as a useful stopping criterion for (the outer loop of) Algorithm \ref{alg:proposed_dca} in practice. In the following remark, we present an intuitive interpretation for this phenomenon.

\begin{remark}[Interpretation of empirically fast convergence of Algorithm \ref{alg:proposed_dca}]\label{remark:zero_initialization}
We consider the NRC model (\ref{eq:experiment:sparse_recovery}) with the quadratic smoothing function. Initializing $\zv_{-1}$ and $\xv_0$ with zero vectors and substituting $\Bm_\rho=\sqrt{\rho/\lambda}\Am$ into Algorithm \ref{alg:proposed_dca}, the subproblem (\ref{eq:alg1_step1}) of the first iteration reduces to
\begin{equation*}
\zv_{0}\in\underset{\zv\in\Real^n}{\argmin}\;\; \frac{\lambda}{\rho}\norm{\zv}_1+\frac{1}{2}\norm{\Am(\mathbf{0}_n-\zv)}^2_2.
\end{equation*}
One can verify that $\zv_0=\mathbf{0}_n$, thus $\uv_0=\mathbf{0}_n$ and
\begin{equation*}
\xv_1\in\underset{\xv\in\Real^n}{\argmin}\; \frac{1}{2}\norm{\yv-\Am\xv}^2_2+\lambda\norm{\xv}_1,
\end{equation*}
which means $\xv_1$ is a solution to the $l_1$-regularization problem. Accordingly, the first iteration of Algorithm \ref{alg:proposed_dca} produces a high-quality guess, and can be interpreted as a bit luxurious preprocessing step. Started from $\xv_1$, one can expect that $\xv_2$ can possibly be a satisfactory estimate and hence Algorithm \ref{alg:proposed_dca} would approach to the true signal rapidly.
\end{remark}

\begin{figure*}
	\centering
	\begin{minipage}{\columnwidth}
		\centering
		\includegraphics[width=\textwidth]{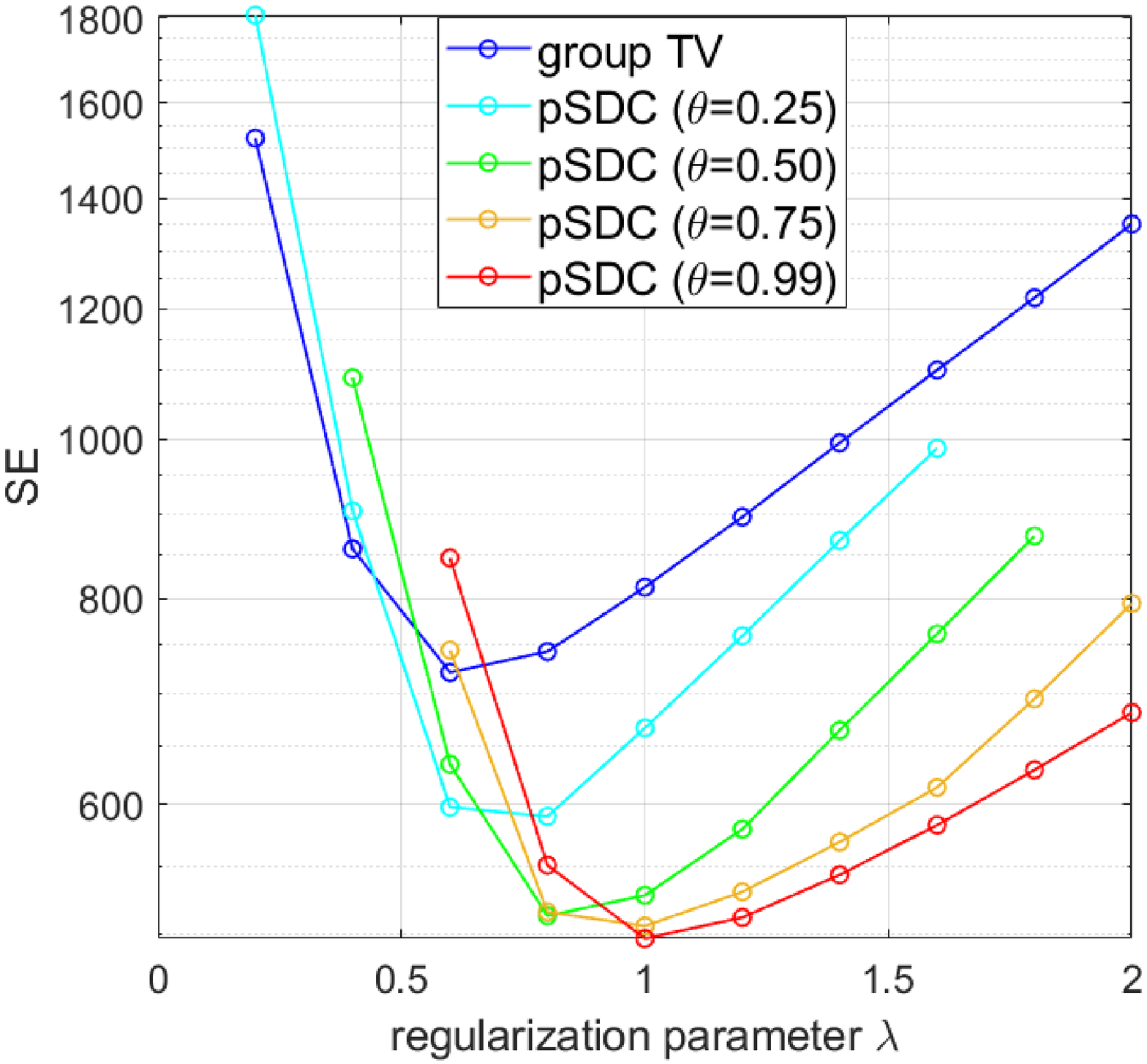}
		\caption{SE vs hyperparameters.}\label{fig:MSE_vs_lambda_poisson}
	\end{minipage}%
	\begin{minipage}{\columnwidth}
		\centering
		\includegraphics[width=\textwidth]{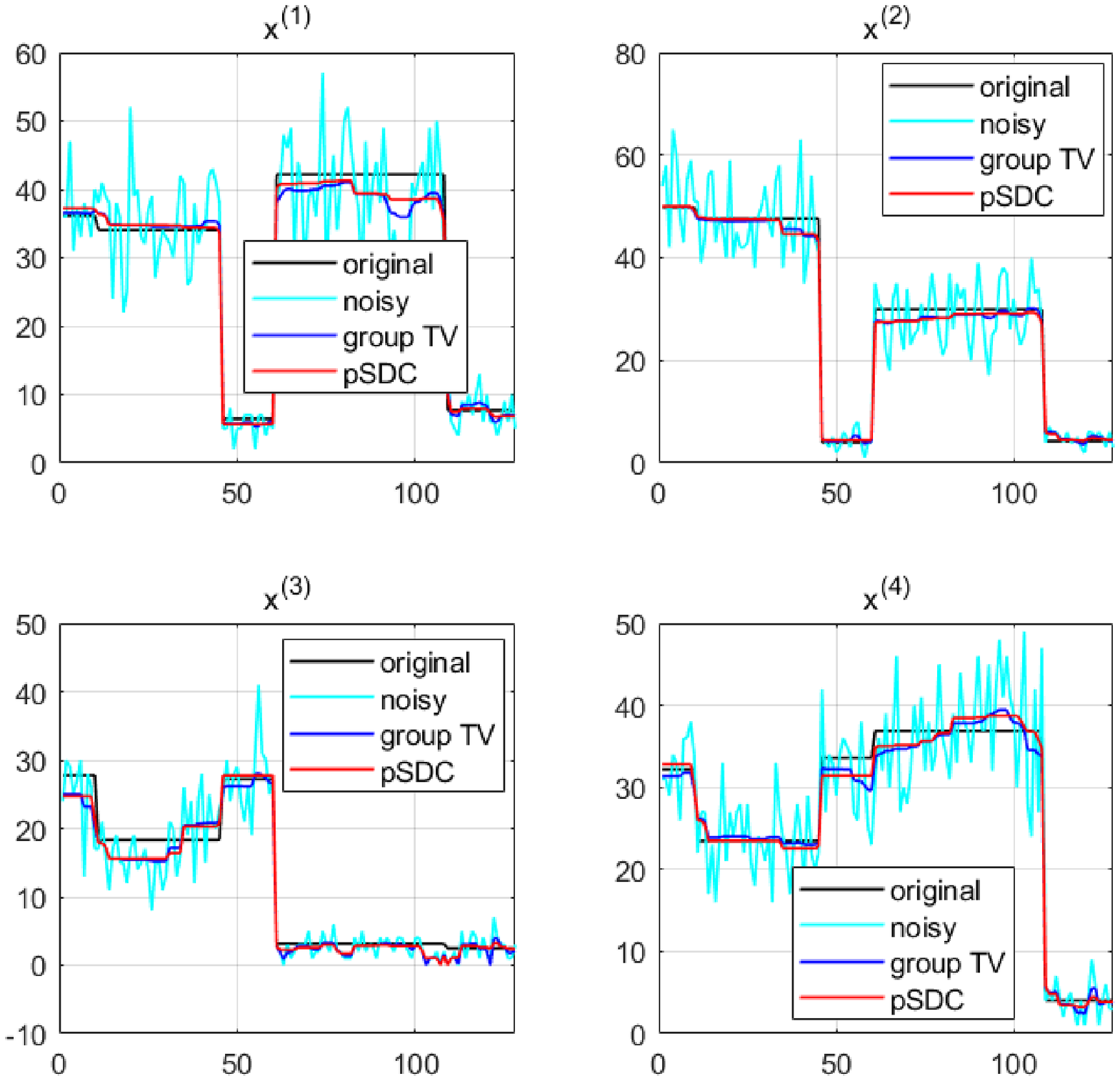}
		\caption{The recovered signals.}
		\label{fig:recovered_signals}
	\end{minipage}
\end{figure*}

\subsection{Group-Sparsity Based Poisson Denoising}\label{sec:poisson_denoising}
To demonstrate the potential of the proposed pSDC regularizer to be extended to more complicated scenarios, and to demonstrate the applicability of the proposed Algorithm \ref{alg:proposed_dca} to more general NRC models, we conduct numerical experiments in the problem of recovering jointly piecewise constant signals (i.e., all of the signals share the same change-points \cite{vert2010}) corrupted by Poisson noise. 

Let the group of piecewise constant signals to be estimated be $\Xm_\star=\left[\xv_\star^{(1)}, \xv_\star^{(2)}, \xv_\star^{(3)}, \xv_\star^{(4)}\right]\in\Real^{128\times 4}$, where $\xv_\star^{(1)}, \xv_\star^{(2)}, \xv_\star^{(3)}, \xv_\star^{(4)}$ are piecewise constant signals which share the same change-points. In each segment of $\xv_\star^{(i)}$ ($i=1,2,3,4$), the signal value is uniformly distributed on $[0,50]$, which implies that the true solution $\Xm_\star$ lies in the set:
\begin{equation}
C_X\coloneqq \braces{\Xm\in\Real^{128\times 4}\mid 0\leq x_{ji}\leq 50 \textrm{ for every }i,j},
\end{equation}
where $x_{ji}$ is the $(j,i)$ entry of $\Xm$. Consider the discrete differential operator $\Dm$ defined as
\begin{equation}
\Dm=\begin{bmatrix}
-1 & 1 &   & \\
 &  \ddots & \ddots &  \\
&  & -1 & 1 
\end{bmatrix}\in\Real^{127\times 128},
\end{equation}
then $\Dm\Xm_{\star}$ possesses group sparsity, i.e., $\Dm\Xm_{\star}$ has few rows with nonzero entries. Our goal is to recover $\Xm_{\star}$ from its noisy measurement $\Ym=\left[\yv^{(1)},\yv^{(2)},\yv^{(3)},\yv^{(4)}\right]\in\Real^{128\times 4}$, where the $j$th entry of $\yv^{(i)}$, say $y_{ji}$, follows the Poisson distribution
\begin{equation}
P\left(y_{ji}=k\mid x_{ji}\right)=\frac{x_{ji}^ke^{-x_{ji}}}{k!}, \quad k=0,1,2,\dots
\end{equation}
with $x_{ji}$ being the $j$th entry of $\xv^{(i)}_{\star}$. We note that this measurement model is also widely used in transmission computed tomography (TCT \cite{bouman1996}). With respect to this measurement model, a natural data fidelity term is the negative log-likelihood function $-\log(P(\Ym\mid\Xm))$, i.e.,
\begin{equation*}
F_{\textrm{TCT}}(\Xm)\coloneqq \begin{cases}
\sum_{j=1}^{128}\sum_{i=1}^4 \parentheses{x_{ji}-y_{ji}\log(x_{ji})}, & \textrm{if }x_{ji}>0,\\
+\infty, & \textrm{otherwise},
\end{cases}
\end{equation*}
which is proper, lower semicontinuous and convex.

We aim to estimate $\Xm_{\star}$ using the regularization technique described in this paper. We first present the design of the pSDC regularizer $\Psi_{p}(\cdot;\Pcal)$. Since $\Psi_p$ should promote the group sparsity of $\Dm\Xm$ and noting that the $l_{2,1}$-norm is the counterpart of the $l_1$-norm for promoting group sparsity \cite{eldar2010}, we adopt the following pSDC regularizer:
\begin{equation*}\label{eq:Psi_for_Poisson}
\Psi_{p}(\Xm;\Pcal)\coloneqq \norm{(\Dm\Xm)^T}_{2,1}-\parentheses{\norm{(\cdot)^T}_{2,1}\Box \phi_{\Pcal}}(\Dm\Xm),
\end{equation*}
where the smoothing function $\phi_{\Pcal}$ is defined as:
\begin{equation*}
\phi_{\Pcal}(\Zm)\coloneqq \sum_{i=1}^4 \frac{1}{2}\norm{\Bm_i\zv^{(i)}}^2_2
\end{equation*}
with $\Pcal\coloneqq (\Bm_i)_{i=1}^4$ and $\zv^{(i)}$ being the $i$th column vector of $\Zm\in\Real^{127\times 4}$. The regularization model we need to solve is 
\begin{equation}\label{eq:poisson_denoising}
\underset{\Xm\in C_X}{\mathrm{minimize}}\;\; F_{\mathrm{TCT}}(\Xm)+\lambda\Psi_{p}\parentheses{\Xm;(\Bm_i)_{i=1}^4}.
\end{equation}

Next we elaborate on the establishment of the overall-convexity condition of (\ref{eq:poisson_denoising}). According to Section \ref{subsec:NRC_model}, we can transform (\ref{eq:poisson_denoising}) into (\ref{eq:NRC_model}) by setting $\Xim\coloneqq \Dm$ and
\begin{align*}
F_1(\Xm) &\coloneqq F_{\textrm{TCT}}(\Xm)+\lambda\norm{(\Dm\Xm)^T}_{2,1}+\iota_{C_X}(\Xm),\\
F_2(\Zm) &\coloneqq \lambda\norm{\Zm^T}_{2,1},\\
\Phi_{\Pcal}(\Zm) &\coloneqq \frac{\lambda}{2}\sum_{i=1}^4 \norm{\Bm_i\zv^{(i)}}^2_2.
\end{align*}
Moreover, define $\xi_{ji}(z)\coloneqq z-y_{ji}\log(z)$, then we have
\begin{equation*}
F_{\textrm{TCT}}(\Xm)=\sum_{i=1}^4 \sum_{j=1}^{128} \xi_{ji}\left(\trace\parentheses{\Em_{ji}^T\Xm}\right),
\end{equation*}
where $\Em_{ji}\in\Real^{128\times 4}$ is the matrix with the $(j,i)$ entry being one and other entries being zero. One can verify that $F_1$ admits the following decomposition:
\begin{align*}
F_1(\Xm) &=F_1^s(\Xm)+F_1^n(\Xm),\textrm{ where }\\
F_1^s(\Xm) &\coloneqq F_{\textrm{TCT}}(\Xm)=\sum_{i=1}^4 \sum_{j=1}^{128} \xi_{ji}\left(\trace\parentheses{\Em_{ji}^T\Xm}\right),\\
F_1^n(\Xm) &\coloneqq \lambda\norm{(\Dm\Xm)^T}_{2,1}+\iota_{C_X}(\Xm).
\end{align*}
Define $C_1^s\coloneqq\braces{\Xm\in\Real^{128\times 4}\mid 0< x_{ji}< 50.01 \textrm{ for every }i,j}$, then $C_1^s\supset \dom{F_1}$ and one can verify that $\xi_{ji}$ is twice continuously differentiable on $\dom{\xi_{ji}}$ satisfying:
\begin{equation*}
(\forall \Xm\in C_1^s)\;\; \xi_{ji}''(\trace(\Em_{ji}^T\Xm))\geq \frac{y_{ji}}{50.01^2}\geq 0, \textrm{ for every }j,i.
\end{equation*}
In addition, note that $\Phi_{\Pcal}$ can be rewritten as
\begin{equation*}
\Phi_{\Pcal}(\Zm)\coloneqq \sum_{i=1}^4\sum_{j=1}^{128} \eta_{ji}(\trace(\bar{\Bm}_{ji}^T\Zm)),
\end{equation*}
where $\bar{\Bm}_{ji}\in\Real^{127\times 4}$ is the matrix which shares the same $j$th column as $\sqrt{\lambda}\Bm_i$ with the other columns being zeros, $\eta_{ji}(z)\coloneqq z^2\in\Gamma_0(\Real)$. One can verify that $\eta_{ji}$ is twice continuously differentiable on $\dom{\eta_{ji}}$ satisfying
\begin{equation*}
(\forall z\in\Real)\;\; 0\leq \eta_{ji}''(z)\leq 1,\textrm{ for every }j,i.
\end{equation*}
Therefore, we can apply Corollary \ref{corol:convexity_condition}-2) to the reformulation above, which yields the following overall-convexity condition for choosing $(\Bm_i)_{i=1}^4$:
\begin{equation}\label{eq:overall_convexity_condition_exp}
\frac{1}{50.01^2}\mathrm{diag}(y_{1,i},\dots,y_{128,i})\succeq \lambda\Dm^T\Bm_i^T\Bm_i\Dm.
\end{equation}
In our experiment we exploit the technique introduced in \cite[Prop. 2]{abe2020} to obtain such matrices satisfying (\ref{eq:overall_convexity_condition_exp}), wherein the values of $(\Bm_i)_{i=1}^4$ are determined by a hyperparameter $\theta\in[0,1]$. Especially, when $\theta=0$, $\Psi_p$ in (\ref{eq:poisson_denoising}) reduces to the group total variation (TV) regularizer $\norm{(\Dm\cdot)^T}_{2,1}$ \cite{selesnick2013}.

We solve the regularization model (\ref{eq:poisson_denoising}) by Algorithm \ref{alg:proposed_dca}. More precisely, we solve the subproblem (\ref{eq:alg1_step1}) by the forward-backward splitting algorithm \cite{combettes2011} and solve (\ref{eq:alg1_step2}) by the augmented Lagrangian method introduced in \cite{wu2011}. We change the values of $\lambda$ and $\theta\in[0,1]$, and evaluate the quality of each recovered signal $\Xm_{\textrm{est}}(\lambda,\theta)$ via the square error (SE):
\begin{equation*}
\mathrm{SE}\coloneqq \norm{\Xm_\mathrm{est}(\lambda,\theta)-\Xm_\star}^2_2.
\end{equation*}
Fig. \ref{fig:MSE_vs_lambda_poisson} shows the dependency of SE on $\lambda$ and $\theta$. From Fig. \ref{fig:MSE_vs_lambda_poisson}, one can verify that for all nonzero values of $\theta$, the proposed pSDC regularizer consistently defeats the group total variation regularizer. Fig. \ref{fig:recovered_signals} plots respectively the original signals (black), their noisy measurements (cyan) and recovered signals by group TV (blue) and the pSDC regularizer (red). From Fig. \ref{fig:recovered_signals}, one can see that the pSDC regularizer tends to produce recovered signals with fewer change points, hence leads to more accurate estimation. 

\section{Conclusion}\label{sec:conclusion}
In this paper, we have proposed a general class of convexity-preserving (CP) regularizers termed the pSDC regularizer. We have shown by presenting concrete examples that assigned with proper building blocks, the proposed pSDC regularizer reproduces existing CP regularizers and opens the way to a large number of new regularizers. With respect to the resultant nonconvexly regularized convex (NRC) model, we have derived a series of overall-convexity conditions which naturally embraces the conditions in previous works. We have proposed a globally convergent DC algorithm for solving the proposed NRC model. Numerical experiments have demonstrated the power of the proposed pSDC regularizer and the efficiency of the proposed DC algorithm.

\appendices

\section{Known Facts}\label{app:known_facts}
\begin{fact}[Sum of lower semicontinuous functions {\cite[Lemma 1.27]{bauschke2017}}]\label{fact:sum_of_ls_functions}
Let $(f_i)_{i\in I}$ be a finite family of lower semicontinuous functions from $\Real^n$ to $\Real\cup\braces{+\infty}$, and let $(\alpha_i)_{i\in I}$ be in $\Real_{++}$. Then $\sum_{i\in I}\alpha_i f_i$ is lower semicontinuous.
\end{fact}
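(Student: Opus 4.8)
The plan is to establish lower semicontinuity through its sequential characterization rather than working directly with the sublevel-set definition, since the sublevel sets of a sum bear no simple relationship to those of the summands. In $\Real^n$, which is a metric space, a function $f:\Real^n\to\Real\cup\braces{+\infty}$ has closed sublevel sets $\levelset_{\leq\xi}f$ for every $\xi\in\Real$ if and only if it is sequentially lower semicontinuous, i.e., $f(\xv)\leq\liminf_{k\to+\infty}f(\xv_k)$ whenever $\xv_k\to\xv$. First I would invoke this standard equivalence to reduce the claim to verifying the sequential inequality for $g\coloneqq\sum_{i\in I}\alpha_i f_i$.

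The core of the argument is an application of the superadditivity of $\liminf$. Fixing an arbitrary point $\xv\in\Real^n$ and an arbitrary sequence $\xv_k\to\xv$, I would write
\begin{equation*}
\liminf_{k\to+\infty}g(\xv_k)=\liminf_{k\to+\infty}\sum_{i\in I}\alpha_i f_i(\xv_k)\geq\sum_{i\in I}\alpha_i\liminf_{k\to+\infty}f_i(\xv_k),
\end{equation*}
using that $I$ is finite and that $\liminf$ is superadditive over finitely many terms. Applying the sequential lower semicontinuity of each $f_i$ then gives $\liminf_{k\to+\infty}f_i(\xv_k)\geq f_i(\xv)$ for every $i\in I$; multiplying by $\alpha_i>0$ and summing yields $\liminf_{k\to+\infty}g(\xv_k)\geq\sum_{i\in I}\alpha_i f_i(\xv)=g(\xv)$, which is precisely the desired inequality.

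The one point requiring care—and the main obstacle, though a mild one—is ensuring that the superadditivity step never produces an indeterminate form such as $(+\infty)+(-\infty)$. This is guaranteed by the range restriction: since each $f_i$ takes values in $\Real\cup\braces{+\infty}$, it never attains $-\infty$, and sequential lower semicontinuity supplies the lower bound $\liminf_{k\to+\infty}f_i(\xv_k)\geq f_i(\xv)>-\infty$. Hence every term $\alpha_i\liminf_{k\to+\infty}f_i(\xv_k)$ lies in $(-\infty,+\infty]$, so the finite sum on the right-hand side is well-defined in $(-\infty,+\infty]$ and the superadditivity inequality is valid. Since $\xv$ and the sequence were arbitrary, $g$ is sequentially lower semicontinuous, hence $\levelset_{\leq\xi}g$ is closed for every $\xi\in\Real$, which is exactly the lower semicontinuity asserted.
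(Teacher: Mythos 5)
Your proof is correct, and it is worth noting that the paper itself gives no proof of this fact---it is stated as a known result with a citation to \cite[Lemma 1.27]{bauschke2017}---so there is nothing in the paper to diverge from. Your argument (passing to the sequential characterization of lower semicontinuity, which is equivalent to the closed-sublevel-set definition in $\Real^n$, then applying superadditivity of $\liminf$ over the finite index set, with the range restriction to $\Real\cup\braces{+\infty}$ correctly invoked to exclude the indeterminate form $(+\infty)+(-\infty)$) is precisely the standard argument underlying the cited lemma, so your attempt matches the intended justification.
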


\begin{fact}[Supremum of lower semicontinuous convex functions {\cite[Prop. 9.3]{bauschke2017}}]\label{fact:sup_of_lsc_functions}
Let $(f_i)_{i\in I}$ be a family of lower semicontinuous convex functions from $\Real^n$ to $\Real\cup \braces{+\infty}$, then $\sup_{i\in I}f_i$ is lower semicontinuous and convex.
\end{fact}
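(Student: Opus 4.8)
The plan is to establish the two assertions—lower semicontinuity and convexity of $g\coloneqq\sup_{i\in I}f_i$—separately, each by a set-theoretic reduction that converts the statement about a supremum into an intersection of simpler sets indexed by $i\in I$.

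First I would treat lower semicontinuity using the level-set criterion recalled in the footnote of Section \ref{subsec:notation}: a function is lower semicontinuous precisely when each of its lower level sets is closed. The key observation is that for any height $\xi\in\Real$, one has $g(\xv)\leq\xi$ if and only if $f_i(\xv)\leq\xi$ for every $i\in I$, which gives the identity
\begin{equation*}
\lev{\xi}g=\bigcap_{i\in I}\lev{\xi}f_i.
\end{equation*}
Because each $f_i$ is lower semicontinuous, every $\lev{\xi}f_i$ is closed, and an arbitrary intersection of closed sets is again closed. Hence $\lev{\xi}g$ is closed for all $\xi\in\Real$, which is exactly lower semicontinuity of $g$.

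For convexity I would argue directly from the defining inequality, since the supremum distributes cleanly and no epigraph machinery is needed. Fixing $\xv,\yv\in\Real^n$ and $\theta\in[0,1]$, convexity of each individual $f_i$ yields $f_i(\theta\xv+(1-\theta)\yv)\leq\theta f_i(\xv)+(1-\theta)f_i(\yv)\leq\theta g(\xv)+(1-\theta)g(\yv)$, where the last inequality uses the pointwise bound $f_i\leq g$. Taking the supremum over $i\in I$ on the left preserves this upper bound, so $g(\theta\xv+(1-\theta)\yv)\leq\theta g(\xv)+(1-\theta)g(\yv)$, establishing convexity.

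The argument is elementary and presents no genuine obstacle; the only point deserving a moment's care is the extended-real-valued setting. Specifically, I would note that the chain of inequalities remains valid when some $f_i(\xv)$ or $f_i(\yv)$ equals $+\infty$ (the right-hand side is then $+\infty$ and the bound is vacuous), and that the level-set identity holds even when $I$ is infinite and the supremum is not attained at any single index. With these conventions fixed, both properties follow at once, completing the proof.
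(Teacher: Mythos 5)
Your proof is correct. The paper states this fact without proof, citing \cite[Prop.~9.3]{bauschke2017}; your argument---the identity $\lev{\xi}\left(\sup_{i\in I}f_i\right)=\bigcap_{i\in I}\lev{\xi}f_i$ combined with closedness under arbitrary intersections for lower semicontinuity, and the direct pointwise inequality for convexity, with the extended-real-valued cases handled explicitly---is exactly the standard textbook argument and is complete.
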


\begin{fact}[Some useful properties of subdifferential {\cite[Thm. 16.3 and Corol. 16.48]{bauschke2017}}]\label{fact:sum_of_subdifferential}
\hfill
\begin{enumerate}
\item Let $f:\Real^n\to\Real\cup\braces{+\infty}$ be proper. Then:
\begin{equation*}
\underset{\xv\in\Real^n}{\argmin}{f}=\braces{\xv\in\Real^n\mid \mathbf{0}_n\in\partial f(\xv)}.
\end{equation*}

\item Let $f,g\in\Gamma_0(\Real^n)$ and suppose that $\dom{g}=\Real^n$, then $\partial (f+g)=\partial f+\partial g$. 
\end{enumerate}

\end{fact}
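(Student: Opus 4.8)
The statement consists of two independent facts, and I would prove each separately. Part 1 (Fermat's rule) is essentially a restatement of the definition of the subdifferential, while Part 2 (the Moreau--Rockafellar sum rule) splits into a trivial inclusion and a nontrivial reverse inclusion requiring a constraint qualification, which here is supplied entirely by the hypothesis $\dom{g}=\Real^n$.

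For Part 1, I would simply unfold the subgradient inequality from Section \ref{subsec:notation}. By definition, $\mathbf{0}_n\in\partial f(\xv)$ means $(\forall \zv\in\Real^n)\; \innerp{\zv-\xv}{\mathbf{0}_n}+f(\xv)\leq f(\zv)$. Since $\innerp{\zv-\xv}{\mathbf{0}_n}=0$, this collapses to $f(\xv)\leq f(\zv)$ for all $\zv$, which is exactly the assertion that $\xv$ is a global minimizer of $f$. The equivalence is therefore immediate and uses only that the zero subgradient kills the linear term; no convexity is needed.

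For Part 2, the inclusion $\partial f(\xv)+\partial g(\xv)\subseteq\partial(f+g)(\xv)$ holds unconditionally: given $\uv_1\in\partial f(\xv)$ and $\uv_2\in\partial g(\xv)$, adding the two defining inequalities shows $\uv_1+\uv_2\in\partial(f+g)(\xv)$. The substance lies in the reverse inclusion, which I would prove by separation. Fix $\uv\in\partial(f+g)(\xv)$; after subtracting the affine map $\innerp{\uv}{\cdot-\xv}$ from $f$ and translating, I may assume $\xv=\mathbf{0}_n$ and $\uv=\mathbf{0}_n$, so that $\mathbf{0}_n$ minimizes $f+g$. I then seek $\vv$ with $-\vv\in\partial f(\mathbf{0}_n)$ and $\vv\in\partial g(\mathbf{0}_n)$. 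To produce it, I would separate the convex sets
\begin{align*}
C_1 &\coloneqq \braces{(\zv,s)\in\Real^n\times\Real\mid s\geq f(\zv)-f(\mathbf{0}_n)},\\
C_2 &\coloneqq \braces{(\zv,s)\in\Real^n\times\Real\mid s\leq g(\mathbf{0}_n)-g(\zv)}.
\end{align*}
The minimizer property $f(\zv)+g(\zv)\geq f(\mathbf{0}_n)+g(\mathbf{0}_n)$ forces $\interior C_2\cap C_1=\emptyset$. Because $\dom{g}=\Real^n$ and a finite convex function is continuous, $C_2$ has nonempty interior and projects onto all of $\Real^n$, so the separating hyperplane theorem yields a nonzero $(\vv,\beta)$ and a constant $\gamma$ separating the two sets. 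The unboundedness of $s$ in each set forces $\beta\geq 0$, evaluation at $\zv=\mathbf{0}_n$ pins $\gamma=0$, and reading off the two resulting inequalities gives $-\vv\in\partial f(\mathbf{0}_n)$ and $\vv\in\partial g(\mathbf{0}_n)$, whose sum is $\mathbf{0}_n=\uv$.

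The main obstacle is the hard inclusion of Part 2, and within it the delicate point that the separating hyperplane is non-vertical, i.e. $\beta\neq 0$. This is precisely where $\dom{g}=\Real^n$ is indispensable: if $\beta=0$ the separation would read $\innerp{\vv}{\zv}\leq\gamma$ for every $\zv\in\Real^n$ (using that $g$ is finite everywhere), which is impossible for $\vv\neq\mathbf{0}_n$; hence $\beta>0$ and normalization is legitimate. Without such a constraint qualification the separating functional could degenerate to a purely horizontal one that merely separates the domains and fails to yield a genuine subgradient decomposition. Everything else---the easy inclusion, the reduction to the origin, and the whole of Part 1---is routine.
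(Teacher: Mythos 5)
Your proof is correct, but note that the paper itself offers no proof of this statement: it is quoted as a known Fact, with the two parts cited to \cite[Thm. 16.3]{bauschke2017} and \cite[Corol. 16.48]{bauschke2017} respectively, so there is no in-paper argument to match. Your Part 1 is the same definitional unfolding as the cited Theorem 16.3, and your observation that convexity plays no role there is accurate. For Part 2 you give the classical Moreau--Rockafellar argument: separate the epigraph of $f(\cdot)-f(\mathbf{0}_n)$ from the hypograph of $g(\mathbf{0}_n)-g(\cdot)$, use $\dom{g}=\Real^n$ (hence continuity and full-space projection of $C_2$) to rule out a vertical separating hyperplane, and read off the two subgradients. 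The cited source instead obtains Corollary 16.48 as a specialization of a more general sum rule proved via conjugacy under a strong-relative-interior qualification, which is what one needs in infinite-dimensional spaces; your direct separation argument is more elementary and entirely adequate in $\Real^n$, where $\dom{g}=\Real^n$ indeed forces $\beta>0$ exactly as you say. Two implicit steps you pass over are routine but worth acknowledging: first, if $\xv\notin\dom{f}$ then both $\partial(f+g)(\xv)$ and $\partial f(\xv)+\partial g(\xv)$ are empty, so the reduction may assume $f(\mathbf{0}_n)$ finite; second, the hyperplane separates $C_1$ from $\interior C_2$, and the inequality extends to all of $C_2$ because $C_2$ is the closure of its interior (the hypograph of a continuous concave function), which is what legitimizes evaluating at the common boundary point $(\mathbf{0}_n,0)$ to pin $\gamma=0$.
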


\begin{fact}[Twice continuously differentiable convex functions {\cite[Thm. 2.1.4]{nesterov2018}}]\label{fact:twice_differentiable_convex}
Let $Q\subset\Real^n$ be an open set. A twice continuously differentiable function $f:\Real^n\to\Real$ is convex on $Q$ if and only if for every $\xv\in Q$, $\nabla^2 f(\xv)\succeq \mathbf{O}_{n\times n}$.
\end{fact}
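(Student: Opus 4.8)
The plan is to prove the equivalence by establishing the two implications separately, working throughout under the natural reading that $Q$ is an open \emph{convex} set so that the segment joining any two of its points lies in $Q$ and the phrase ``convex on $Q$'' is unambiguous (as in the cited source). Since $f\in C^2(\Real^n)$, differentiability is never in question; the entire content is the relation between the curvature of $f$ and the sign of its Hessian.

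For the necessity direction (convexity $\Rightarrow$ positive semidefiniteness), I would fix an arbitrary $\xv\in Q$ and an arbitrary direction $\vv\in\Real^n$. By openness of $Q$ there is $\tau>0$ with $\xv\pm t\vv\in Q$ whenever $|t|<\tau$. Applying midpoint convexity at the points $\xv+t\vv$ and $\xv-t\vv$, whose midpoint is $\xv$, gives $f(\xv+t\vv)+f(\xv-t\vv)-2f(\xv)\geq 0$. A second-order Taylor expansion of $f$ about $\xv$ (legitimate since $f\in C^2$) shows that the left-hand side equals $t^2\innerp{\nabla^2 f(\xv)\vv}{\vv}+o(t^2)$; dividing by $t^2$ and letting $t\to 0$ yields $\innerp{\nabla^2 f(\xv)\vv}{\vv}\geq 0$. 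As $\vv$ is arbitrary, $\nabla^2 f(\xv)\succeq \mathbf{O}_{n\times n}$.

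For the sufficiency direction (positive semidefiniteness $\Rightarrow$ convexity), I would fix $\xv,\yv\in Q$ and use Taylor's theorem with integral remainder along the segment $[\xv,\yv]\subset Q$:
\[
f(\yv)=f(\xv)+\innerp{\nabla f(\xv)}{\yv-\xv}+\int_0^1(1-t)\innerp{\nabla^2 f(\xv+t(\yv-\xv))(\yv-\xv)}{\yv-\xv}\,dt.
\]
Each integrand is nonnegative, because $\xv+t(\yv-\xv)\in Q$ (convexity of $Q$) and $\nabla^2 f\succeq \mathbf{O}_{n\times n}$ there, so $f(\yv)\geq f(\xv)+\innerp{\nabla f(\xv)}{\yv-\xv}$. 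This first-order condition is equivalent to convexity of $f$ on $Q$, which I would verify by the standard convex-combination argument: for $\xv,\yv\in Q$ and $\alpha\in[0,1]$ set $\zv\coloneqq\alpha\xv+(1-\alpha)\yv\in Q$, apply the first-order inequality at $\zv$ against both $\xv$ and $\yv$, and take the $\alpha$-weighted sum to recover $f(\zv)\leq\alpha f(\xv)+(1-\alpha)f(\yv)$.

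The Taylor expansions are routine; the point requiring genuine care, and the step I expect to be the main obstacle, is the role of the domain. The equivalence is clean precisely when segments between points of $Q$ remain inside $Q$: the integral-remainder argument in the sufficiency direction breaks if the segment $[\xv,\yv]$ exits $Q$ into a region where the Hessian may be indefinite. I would therefore either assume $Q$ convex outright (consistent with the cited statement) or, for a merely open $Q$, read the conclusion locally, noting that the semidefinite condition guarantees convexity of $f$ on every convex subset of $Q$. An alternative, fully symmetric route that sidesteps the first-order condition is to restrict $f$ to each line segment, i.e.\ study $\phi(t)\coloneqq f(\xv+t\vv)$ whose second derivative is $\innerp{\nabla^2 f(\xv+t\vv)\vv}{\vv}$ by the chain rule, thereby reducing both implications to the one-dimensional fact that a $C^2$ function is convex iff its second derivative is nonnegative; but this only relocates the same domain hypothesis into the requirement that the relevant segment stay in $Q$.
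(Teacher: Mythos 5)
The paper offers no proof of this statement at all: it appears in Appendix~\ref{app:known_facts} as a ``known fact,'' quoted with a citation to \cite[Thm. 2.1.4]{nesterov2018}, so there is no in-paper argument to compare yours against. Judged on its own merits, your proof is correct and is the standard textbook argument: necessity via the midpoint inequality at the symmetric points $\xv\pm t\vv$ (valid for small $t$ by openness of $Q$) combined with a second-order expansion, and sufficiency via Taylor's theorem with integral remainder, which yields the first-order inequality $f(\yv)\geq f(\xv)+\innerp{\nabla f(\xv)}{\yv-\xv}$, followed by the usual weighted-sum argument. Your caveat about the domain is exactly the right thing to flag: as literally phrased with $Q$ merely open, the ``if'' direction is problematic, since positive semidefiniteness of $\nabla^2 f$ on a non-convex $Q$ only controls $f$ along segments that remain inside $Q$, and the cited source works with convex domains. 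Reassuringly, the only place the paper invokes this Fact is the proof of Corollary~\ref{corol:convexity_condition}, where the conclusion actually used is convexity of $\bar{M}_{\zv}(\cdot;\Pcal_o)$ on $C_1^s\supset\dom{F_1}$; since $\dom{F_1}$ is convex, every relevant segment stays inside $C_1^s$, so your ``local'' reading (convexity on every convex subset of $Q$), equivalently your one-dimensional restriction argument via $\phi(t)\coloneqq f(\xv+t\vv)$, delivers precisely what the paper needs.
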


\begin{fact}[Coercive proper lower semicontinuous functions {\cite[Prop. 11.12 and 11.13]{bauschke2017}}]\label{fact:coercive_lsc_functions}
Let $f\in\Gamma_0(\Real^n)$. Then:
\begin{enumerate}
\item $f$ is coercive if and only if for every $\xi\in\Real$, the level set $\levelset_{\leq \xi}f$ is bounded.

\item $f$ is coercive if and only if there exists $\xi\in\Real$ such that $\levelset_{\leq \xi}f$ is nonempty and bounded.
\end{enumerate}
\end{fact}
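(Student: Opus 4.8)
The plan is to handle the two equivalences separately, noting that statement 1) is essentially a reformulation of the definition of coercivity (it uses neither convexity nor lower semicontinuity), whereas the reverse implication of statement 2) is where the full strength of $f\in\Gamma_0(\Real^n)$ is needed. Throughout I would work from the definition that $f$ is coercive means $f(\xv)\to+\infty$ as $\norm{\xv}_2\to+\infty$, i.e., for every $M\in\Real$ there is $R>0$ with $\norm{\xv}_2>R\Rightarrow f(\xv)>M$.

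For statement 1), I would prove both directions by directly unwinding this definition. For the forward implication, fixing $\xi\in\Real$ and supposing $\levelset_{\leq\xi}f$ were unbounded, I would extract a sequence $(\xv_k)_{k\in\Natural}$ in $\levelset_{\leq\xi}f$ with $\norm{\xv_k}_2\to+\infty$; then $f(\xv_k)\leq\xi$ for all $k$ contradicts coercivity. Conversely, boundedness of every $\levelset_{\leq M}f$ means each such set lies in a ball of some radius $R$, so $\norm{\xv}_2>R$ forces $f(\xv)>M$, which is exactly coercivity. The forward direction of statement 2) then follows immediately by combining 1) with properness: since $\dom f\neq\emptyset$, I pick $\xv_0\in\dom f$ and set $\xi\coloneqq f(\xv_0)\in\Real$, so that $\levelset_{\leq\xi}f$ contains $\xv_0$ (hence is nonempty) and is bounded by 1).

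The substance lies in the converse of statement 2), where I assume a single nonempty bounded level set $\levelset_{\leq\xi_0}f$ and must deduce coercivity. By 1) it suffices to exclude an unbounded higher level set, so suppose for contradiction that some $\levelset_{\leq\xi_1}f$ with $\xi_1>\xi_0$ is unbounded, producing a sequence $(\xv_k)_{k\in\Natural}$ in it with $\norm{\xv_k}_2\to+\infty$. Fixing $\bar{\xv}\in\levelset_{\leq\xi_0}f$, I would form the unit directions $d_k\coloneqq(\xv_k-\bar{\xv})/\norm{\xv_k-\bar{\xv}}_2$ and pass to a subsequence with $d_k\to d$ on the compact unit sphere. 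For each fixed $t\geq0$ and $k$ large enough that $\norm{\xv_k-\bar{\xv}}_2\geq t$, the point $\bar{\xv}+t d_k=(1-\alpha_k)\bar{\xv}+\alpha_k\xv_k$ with $\alpha_k=t/\norm{\xv_k-\bar{\xv}}_2\in[0,1]$, so convexity gives $f(\bar{\xv}+t d_k)\leq(1-\alpha_k)\xi_0+\alpha_k\xi_1$; since $\alpha_k\to0$ and $\bar{\xv}+t d_k\to\bar{\xv}+t d$, lower semicontinuity yields $f(\bar{\xv}+t d)\leq\xi_0$. Thus the entire ray $\braces{\bar{\xv}+t d\mid t\geq0}$ lies in $\levelset_{\leq\xi_0}f$, contradicting its boundedness.

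The main obstacle is exactly this ray-construction argument: it is the only place where both defining properties of $\Gamma_0(\Real^n)$ are indispensable, convexity to bound $f$ along each segment aimed at the escaping sequence and lower semicontinuity to transfer that bound to the limiting ray. Without convexity a single bounded level set imposes no control on the others, and without lower semicontinuity the limit step fails; the remaining portions of the statement are comparatively routine reformulations of the definition.
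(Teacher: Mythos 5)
Your proof is correct, but note that the paper offers no proof of its own here: the statement is labelled a Fact and imported verbatim from \cite[Prop.~11.12 and 11.13]{bauschke2017}, so the only meaningful comparison is with that cited source. Part 1) you handle by the same definitional unwinding as the textbook, and your observation that it needs neither convexity nor lower semicontinuity is accurate. For the converse of part 2) --- the only substantive step --- the textbook route goes through the recession calculus (all nonempty level sets of an $f\in\Gamma_0(\Real^n)$ share the same recession cone, so a single nonempty bounded level set forces all level sets to be bounded), whereas your ray construction in effect reproves exactly that recession-cone inclusion by hand: convexity bounds $f(\bar{\xv}+t\bm{d}_k)$ by $(1-\alpha_k)\xi_0+\alpha_k\xi_1$ with $\alpha_k\to 0$, lower semicontinuity transfers the bound $\xi_0$ to the limiting ray $\braces{\bar{\xv}+t\bm{d}\mid t\geq 0}$, and the ray then contradicts boundedness of $\levelset_{\leq\xi_0}f$. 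Your version is more elementary and self-contained, at the mild cost of being specific to $\Real^n$ (you use compactness of the unit sphere to extract $\bm{d}_k\to\bm{d}$, while the recession machinery extends to infinite dimensions) --- which is all the paper needs, since Fact~\ref{fact:coercive_lsc_functions} is invoked only for $J_p(\cdot;\Pcal_o)\in\Gamma_0(\Real^n)$ in Appendix~\ref{app:convergence_of_DC_algorithm}. Two points you use implicitly are fine but worth stating: level sets at heights $\xi\leq\xi_0$ are automatically bounded as subsets of $\levelset_{\leq\xi_0}f$, which is why excluding only unbounded sets at heights $\xi_1>\xi_0$ suffices before invoking 1); and properness rules out the value $-\infty$, so the convexity inequality involves no ill-defined arithmetic. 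Your closing indispensability remarks are also right; for instance, lower semicontinuity cannot be dropped, as shown by the convex function on $\Real^2$ equal to $y$ on $\braces{(x,y)\mid y>0}\cup\braces{(0,0)}$ and $+\infty$ elsewhere, whose level set at height $0$ is the nonempty bounded set $\braces{(0,0)}$ even though the function is not coercive.
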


\begin{fact}[Properties of the simplified DCA {\cite[Thm. 3 (i)(iv)]{pham1997}}]\label{fact:properties_of_DCA} Let $(\xv_k)_{k\in\Natural}$ and $(\uv_k)_{k\in\Natural}$ be sequences defined by Algorithm \ref{alg:simplified_dca} for (\ref{eq:DC_program}). Then the following holds:
\[J(\xv_{k+1})\leq J(\xv_k),\]
Moreover, suppose that $J$ in (\ref{eq:DC_program}) is bounded below, i.e., \[\inf_{\xv\in\Real^n}J(\xv)>-\infty,\]
and suppose that $(\xv_k)_{k\in\Natural}$ and $(\uv_k)_{k\in\Natural}$ are bounded. Then for every limit point $\bar{\xv}$ of $(\xv_k)_{k\in\Natural}$, the following holds:
\begin{enumerate}
\item there exists a limit point $\bar{\uv}$ of $(\uv_k)_{k\in\Natural}$ such that $\bar{\uv}\in\partial g(\bar{\xv})\cap \partial h(\bar{\xv})$,

\item $\lim_{k\to +\infty}J(\xv_k)=J(\bar{\xv})$.
\end{enumerate}

\end{fact}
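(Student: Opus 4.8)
The plan is to run Algorithm~\ref{alg:simplified_dca} through the majorization viewpoint set up just before the statement, and then to localize the subgradient information through a Fenchel--Young reformulation of the per-iteration decrease. First I would establish the unconditional descent. Since $\uv_k\in\partial h(\xv_k)$, the affine minorant $h_k(\xv)\coloneqq h(\xv_k)+\innerp{\uv_k}{\xv-\xv_k}$ satisfies $h_k\le h$ with equality at $\xv_k$, so $J_k\coloneqq g-h_k$ majorizes $J=g-h$ and $J_k(\xv_k)=J(\xv_k)$; as $\xv_{k+1}$ minimizes $g-\innerp{\uv_k}{\cdot}$ (hence $J_k$), we obtain $J(\xv_{k+1})\le J_k(\xv_{k+1})\le J_k(\xv_k)=J(\xv_k)$. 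I would sharpen this into the exact identity $J(\xv_k)-J(\xv_{k+1})=D_g^{(k)}+D_h^{(k)}$, with $D_g^{(k)}\coloneqq g(\xv_k)-g(\xv_{k+1})-\innerp{\uv_k}{\xv_k-\xv_{k+1}}\ge 0$ (Fermat's rule at Step~2 gives $\uv_k\in\partial g(\xv_{k+1})$) and $D_h^{(k)}\coloneqq h(\xv_{k+1})-h(\xv_k)-\innerp{\uv_k}{\xv_{k+1}-\xv_k}\ge 0$. Under boundedness from below of $J$, the nonincreasing sequence $(J(\xv_k))_{k\in\Natural}$ converges to some $J_\infty$, and telescoping yields $\sum_k (D_g^{(k)}+D_h^{(k)})\le J(\xv_0)-J_\infty<+\infty$, so in particular $D_g^{(k)}\to 0$.

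Next I would prove the value convergence (assertion~2). Fix a limit point $\bar{\xv}$ with $\xv_{k_j}\to\bar{\xv}$. Evaluating the majorant bound $J(\xv_{k+1})\le g(\xv)-h(\xv_k)-\innerp{\uv_k}{\xv-\xv_k}$ at $\xv=\bar{\xv}$ along $(k_j)$ and using that $h$ is finite convex on $\Real^n$ (hence continuous), that $(\uv_k)_{k\in\Natural}$ is bounded, and that $\bar{\xv}-\xv_{k_j}\to\mathbf{0}_n$, the right-hand side tends to $g(\bar{\xv})-h(\bar{\xv})=J(\bar{\xv})$ while the left-hand side tends to $J_\infty$; thus $J_\infty\le J(\bar{\xv})$. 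The reverse inequality is lower semicontinuity of $J$ (the sum of the lower semicontinuous $g$ and the continuous $-h$), giving $J(\bar{\xv})\le\liminf_j J(\xv_{k_j})=J_\infty$. Hence $J_\infty=J(\bar{\xv})$.

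For the criticality claim (assertion~1), extract from the bounded $(\uv_{k_j})$ a subsequence (not relabeled) with $\uv_{k_j}\to\bar{\uv}$. Passing to the limit in $h(\yv)\ge h(\xv_{k_j})+\innerp{\uv_{k_j}}{\yv-\xv_{k_j}}$, using continuity of $h$, yields $\bar{\uv}\in\partial h(\bar{\xv})$. The membership $\bar{\uv}\in\partial g(\bar{\xv})$ is the crux, and I would obtain it from the identity $D_g^{(k)}=g(\xv_k)+g^*(\uv_k)-\innerp{\uv_k}{\xv_k}$, which follows by substituting the Fenchel--Young equality $g(\xv_{k+1})+g^*(\uv_k)=\innerp{\uv_k}{\xv_{k+1}}$ (valid since $\uv_k\in\partial g(\xv_{k+1})$) into the definition of $D_g^{(k)}$. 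Since $D_g^{(k)}\to 0$ and $\innerp{\uv_{k_j}}{\xv_{k_j}}\to\innerp{\bar{\uv}}{\bar{\xv}}$, we get $g(\xv_{k_j})+g^*(\uv_{k_j})\to\innerp{\bar{\uv}}{\bar{\xv}}$; lower semicontinuity of $g$ and of $g^*$ then forces $g(\bar{\xv})+g^*(\bar{\uv})\le\innerp{\bar{\uv}}{\bar{\xv}}$, which combined with the Fenchel--Young inequality is an equality, i.e. $\bar{\uv}\in\partial g(\bar{\xv})$. Therefore $\bar{\uv}\in\partial g(\bar{\xv})\cap\partial h(\bar{\xv})$.

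The hard part is precisely this last membership. Because $g$ and $h$ are assumed merely convex (not strongly convex), the successive differences $\norm{\xv_{k+1}-\xv_k}_2$ need not vanish, so one cannot legitimately send $\xv_{k+1}\to\bar{\xv}$ and pass to the limit in $\uv_k\in\partial g(\xv_{k+1})$ via the closed graph of $\partial g$; the shifted subsequence $(\xv_{k_j+1})$ may accumulate at a different point. The Fenchel--Young-gap rewriting of $D_g^{(k)}$ is what resolves this: it re-expresses the subgradient certificate in terms of the pair $(\xv_k,\uv_k)$, whose limits along $(k_j)$ are under control, so that $D_g^{(k)}\to 0$ alone delivers $\bar{\uv}\in\partial g(\bar{\xv})$ without any claim on $(\xv_{k_j+1})$.
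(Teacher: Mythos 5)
Your proof is correct and complete: the descent inequality, the exact decrease identity $J(\xv_k)-J(\xv_{k+1})=D_g^{(k)}+D_h^{(k)}$ with both gaps nonnegative, the value-convergence argument via the majorant bound plus lower semicontinuity, and in particular the Fenchel--Young rewriting $D_g^{(k)}=g(\xv_k)+g^*(\uv_k)-\innerp{\uv_k}{\xv_k}$ (valid since Fermat's rule at Step~2 gives $\uv_k\in\partial g(\xv_{k+1})$), which correctly circumvents the fact that $\norm{\xv_{k+1}-\xv_k}_2$ need not vanish without strong convexity. Note that the paper itself offers no proof of this statement---it is imported as a known fact from \cite[Thm.~3~(i)(iv)]{pham1997}---and your argument is essentially the classical DCA convergence analysis of that reference, where the same device (lower semicontinuity of $g$ and $g^*$ applied to the vanishing duality gap along the pair $(\xv_{k_j},\uv_{k_j})$, rather than a closed-graph limit along the shifted subsequence $(\xv_{k_j+1})$) is the crux.
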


\section{Proof of Theorem \ref{thm:general_convexity_condition}}
\label{app:general_convexity_condition}

\begin{proof}
We first show that $J_{p}(\cdot;\mathcal{P}_o)$ is proper. Since $F_1\in\Gamma_0(\Real^n)$ from assumption, $F_1$ is proper, lower semicontinuous and convex. Hence there exists $\xv_0\in \dom{F_1}$, and we have
\begin{flalign*}
J_{p}(\xv_0;\mathcal{P}_o)= F_1(\xv_0)-(F_2\Box\Phi_{\Pcal_o})(\Xim\xv_0)< +\infty
\end{flalign*}
from Assumption \ref{assump:finite_inf_conv}. Hence $J_{p}(\cdot;\Pcal_o)$ is proper.

Next we prove that $J_{p}(\cdot;\mathcal{P}_o)$ is lower semicontinuous and convex. Since $\Phi_{\Pcal_o}(\cdot)$ is continuous, $-\Phi_{\Pcal_o}(\Xim\cdot-\zv)$ is continuous (thus is lower semicontinuous). Since $F_1$ is lower semicontinuous, Fact \ref{fact:sum_of_ls_functions} in Appendix \ref{app:known_facts} implies that
\[M_{\zv}(\cdot;\Pcal_o)\coloneqq F_1(\cdot)-\Phi_{\Pcal_o}(\Xim\cdot-\zv)\]
is lower semicontinuous for every $\zv\in\Real^q$. 

We rewrite $J_{p}(\xv;\Pcal_o)$ as follows,
\begin{eqnarray*}
J_{p}(\xv;\mathcal{P}_o)& = & F_1(\xv)-(F_2\Box\Phi_{\mathcal{P}_o})(\bm{\Xi} \xv)\\
 &= & F_1(\xv)-\inf_{\zv\in\Real^q}\braces{F_2(\zv)+\Phi_{\mathcal{P}_o}(\bm{\Xi}\xv-\zv)}\\
&=& F_1(\xv)-\inf_{\zv\in\dom{F_2}}\braces{F_2(\zv)+\Phi_{\mathcal{P}_o}(\bm{\Xi}\xv-\zv)}\\
&=& \sup_{\zv\in\dom{F_2}}\braces{F_1(\xv)-F_2(\zv)-\Phi_{\mathcal{P}_o}(\bm{\Xi}\xv-\zv)}\\
&=& \sup_{\zv\in\dom{F_2}} \braces{M_{\zv}(\xv;\mathcal{P}_o)-F_2(\zv)}.
\end{eqnarray*}
Accordingly, $J_{p}(\cdot;\Pcal_o)$ is the supremum of the following family of functions:
\[\parentheses{M_{\zv}(\cdot;\Pcal_o)-F_2(\zv)}_{\zv\in\dom{F_2}},\]
Since we have proved that $M_{\zv}(\cdot;\Pcal_o)$ is lower semicontinuous for every $\zv\in\Real^q$ ,and $M_{\zv}(\cdot;\Pcal_o)$ is convex for every $\zv\in\dom{F_2}$ from assumption in Theorem \ref{thm:general_convexity_condition},
$J_{p}(\cdot;\Pcal_o)$ is the supremum of a family of lower semicontinuous convex functions, hence is lower semicontinuous convex (Fact \ref{fact:sup_of_lsc_functions}). 

Combining the discussion above completes the proof.
\end{proof}

\section{Proof of Corollary \ref{corol:convexity_condition}}
\label{app:specific_convexity_condition}
\begin{proof}
1) For a fixed $\zv\in\Real^q$, we define
\begin{equation*}
\bar{M}_{\zv}(\xv;\Pcal_o)\coloneqq F_{1}^s(\xv)-\Phi_{\Pcal_o}(\Xim \xv-\zv).
\end{equation*}
Taking the Hessian of $\bar{M}_{\zv}(\xv;\Pcal_o)$ on $C_1^s$ yields (cf. \cite[A.4.4]{boyd2004} for derivation of $\nabla^2  \bar{M}_{\zv}(\xv;\Pcal_o)$)
\begin{equation*}
\nabla^2 \bar{M}_{\zv}(\xv;\Pcal_o)=\nabla^2 F_{1}^s(\xv)-\Xim^T\nabla^2 \Phi_{\Pcal_o}(\Xim \xv-\zv)\Xim.
\end{equation*}
The following holds from assumption:
\begin{equation*}
(\forall \xv\in C_1^s,\forall \zv\in\dom{F_2})\;\; \nabla^2 \bar{M}_{\zv}(\xv;\Pcal_o)\succeq \mathbf{O}_{n\times n}.
\end{equation*}
Hence Fact \ref{fact:twice_differentiable_convex} guarantees that $
\nabla^2 \bar{M}_{\zv}(\cdot;\Pcal_o)$ is convex on $C_1^s$ for every $\zv\in\dom{F_2}$. Moreover, since
\begin{align*}
M_{\zv}(\cdot;\Pcal_o) &\coloneqq F_1(\cdot)-\Phi_{\Pcal_o}(\Xim\cdot-\zv)\\
&=F_1^s(\cdot)+F_1^n(\cdot)-\Phi_{\Pcal_o}(\Xim\cdot-\zv)  \\
&=\bar{M}_{\zv}(\cdot;\Pcal_o)+F_1^n(\cdot),
\end{align*}
for every $\zv\in\dom{F_2}$, the convexity of $F_1^n$ and the convexity of $\bar{M}_{\zv}(\cdot;\Pcal_o)$ on $C_1^s\supset\dom{F_1}$ yields the convexity of $M_{\zv}(\cdot;\Pcal_o)$. Thus $J_{p}(\cdot;\Pcal_o)\in\Gamma_0(\Real^n)$ from Theorem \ref{thm:general_convexity_condition}. 

\noindent
2) From the assumption, we have the following:
\begin{equation*}
\nabla^2 F_1^s(\xv)=\Am^T \begin{bmatrix}
\xi_1''(\av_1^T\xv) & & \\
 &  \ddots & \\
 & & \xi_m''(\av_m^T\xv)
\end{bmatrix}
\Am.
\end{equation*}
Since $\xi_i''(\av_i^T\xv)\geq \gamma_i$ for every $\xv\in C_1^s$, the following holds
\begin{equation*}
\begin{bmatrix}
\xi_1''(\av_1^T\xv) & &  \\
 & \ddots & \\
 & & \xi_m''(\av_m^T\xv)
\end{bmatrix}\succeq
\mathrm{diag}\parentheses{\gamma_1,\dots,\gamma_m}
\end{equation*}
for every $\xv\in C_1^s$, which implies the following inequality:
\begin{equation}\label{eq:hessian_F1s_geq_AtA}
(\forall \xv\in C_1^s)\;\;\nabla^2 F_1^s(\xv)\succeq \Am^T\mathrm{diag}\parentheses{\gamma_1,\dots,\gamma_m}\Am.
\end{equation}
On the other hand, consider the smoothing function \[\Phi_{\Pcal}(\zv)\coloneqq \sum_{j=1}^p\eta_j(\bv_j^T\zv),\]
then for every $\zv\in\Real^q$, the following holds:
\begin{equation*}
\nabla^2\Phi_{\Pcal}(\zv)=\Bm^T \begin{bmatrix}
\eta_1''(\bv_1^T\zv) & & \\
 & \ddots & \\
 & & \eta_p''(\bv_p^T\zv)
\end{bmatrix}
\Bm.
\end{equation*} 
Since $\eta''_j(z)\leq \kappa_j$ for every $z\in\Real$, we have
\begin{equation*}
\begin{bmatrix}
\eta_1''(\bv_1^T\zv) & & \\
 & \ddots & \\
 & & \eta_p''(\bv_p^T\zv)
\end{bmatrix}\preceq \mathrm{diag}\parentheses{\kappa_1,\dots, \kappa_p},
\end{equation*}
which implies the following inequality:
\begin{equation*}
(\forall \zv\in\Real^q)\;\; \nabla^2 \Phi_{\Pcal}(\zv)\preceq \Bm^T\mathrm{diag}\parentheses{\kappa_1,\dots,\kappa_p}\Bm.
\end{equation*}
From the assumption that $\Pcal_o\coloneqq\Bm_o$ satisfies
\begin{equation*}
\Am^T\mathrm{diag}\parentheses{\gamma_1,\dots,\gamma_m}\Am\succeq\Xim^T\Bm_o^T\mathrm{diag}\parentheses{\kappa_1,\dots,\kappa_p}\Bm_o\Xim,
\end{equation*}
we conclude that for every $\zv\in\Real^q$,
\begin{equation*}
\Am^T\mathrm{diag}\parentheses{\gamma_1,\dots,\gamma_m}\Am\succeq \Xim^T\nabla^2\Phi_{\Pcal_o}(\zv)\Xim,
\end{equation*}
combining which with (\ref{eq:hessian_F1s_geq_AtA}) yields the following
\begin{equation*}
(\forall\xv\in C_1^s,\forall \zv\in \Real^q)\;\;\nabla^2 F_1^s(\xv)\succeq \Xim^T\nabla^2\Phi_{\Pcal_o}(\zv)\Xim.
\end{equation*} 
Hence $J_{p}(\cdot;\Pcal_o)\in\Gamma_0(\Real^n)$ from the result of 1). 
\qedhere
\end{proof}

\section{Proof of Theorem \ref{thm:convergence_of_dc_algorithm}}
\label{app:convergence_of_DC_algorithm}

\begin{proof}
1) The result follows from Fact \ref{fact:properties_of_DCA}.

\noindent
2) Since the overall-convexity condition holds, we have $J_{p}(\cdot;\Pcal_o)\in\Gamma_0(\Real^n)$ by Theorem \ref{thm:general_convexity_condition}. Define
\begin{equation*}
\alpha\coloneqq \inf_{\xv\in\Real^n}J_{p}(\xv;\Pcal_o),
\end{equation*}
then from Assumption \ref{assump:boundedness_of_argmin},
\[\levelset_{\leq \alpha}\;J_{p}(\cdot;\Pcal_o)=\underset{{\xv\in\Real^n}}{\argmin}\;J_{p}(\cdot;\Pcal_o)\]
is nonempty and bounded. By Fact \ref{fact:coercive_lsc_functions}, $J_{p}(\cdot;\Pcal_o)$ is coercive, which implies that for $\xi_0\coloneqq J_{p}(\xv_0;\Pcal_o)$, $\levelset_{\leq \xi_0}\; J_{p}(\cdot;\Pcal_o)$
is bounded. According to the result 1), for every $k\geq 1$,
\[J_{p}(\xv_k;\Pcal_o)\leq J_{p}(\xv_0;\Pcal_o)=\xi_0,\]
hence $(\xv_k)_{k\in\Natural}\subset \levelset_{\leq \xi_0}\; J_{p}(\cdot;\Pcal_o)$ is bounded. 

Since we have $\uv_k=\nabla \bar{F}_2(\xv_k;\Pcal_o)$ and $\nabla \bar{F}_2(\cdot;\Pcal_o)$ is continuous from Lemma \ref{lemma:nabla_bar_F2}, we can yield the boundedness of of $(\uv_k)_{k\in\Natural}$ from the boundedness of $(\xv_k)_{k\in\Natural}$.

\noindent
3) Let $\bar{\xv}$ be a limit point of $(\xv_k)_{k\in\Natural}$. By Fact \ref{fact:properties_of_DCA}, there exists a limit point $\bar{\uv}$ of $(\uv_k)_{k\in\Natural}$ such that
\begin{equation*}
\bar{\uv}\in\partial F_1(\bar{\xv})\cap \partial \bar{F}_2(\bar{\xv};\Pcal_o)=\partial F_1(\bar{\xv})\cap \braces{\nabla \bar{F}_2(\bar{\xv};\Pcal_o)}.
\end{equation*}
The inclusion above implies that $\bar{\uv}=\nabla \bar{F}_2(\bar{\xv};\Pcal_o)$ and
\begin{equation*}
\nabla \bar{F}_2(\bar{\xv};\Pcal_o)\in\partial F_1(\bar{\xv}).
\end{equation*}
Hence the following inclusion hold
\begin{equation}\label{eq:zero_inclusion}
\mathbf{0}_n\in \braces{\uv-\nabla \bar{F}_2(\bar{\xv};\Pcal_o) \mid \uv\in\partial F_1(\bar{\xv})}.
\end{equation}

Since $J_{p}(\bar{\xv};\Pcal_o)=F_1(\bar{\xv})-\bar{F}_2(\bar{\xv};\Pcal_o)$, we have 
\begin{equation*}
F_1(\bar{\xv})=J_{p}(\bar{\xv};\Pcal_o)+\bar{F}_2(\bar{\xv};\Pcal_o).
\end{equation*}
By Thm. \ref{thm:general_convexity_condition} and Lemma \ref{lemma:nabla_bar_F2}, $J_{p}(\cdot;\Pcal_o)$ and $\bar{F}_2(\cdot;\Pcal_o)$ are in $\Gamma_0(\Real^n)$. In addition, $\dom{\bar{F}_2(\cdot;\Pcal_o)}=\Real^n$ by Assumption \ref{assump:finite_inf_conv}. Hence Fact \ref{fact:sum_of_subdifferential} guarantees the following equality:
\begin{equation*}
\partial F_1(\bar{\xv})=\partial J_{p}(\bar{\xv};\Pcal_o)+\partial \bar{F}_2(\bar{\xv};\Pcal_o),
\end{equation*}
The single-valuedness of $\partial \bar{F}_2(\bar{\xv};\Pcal_o)$ then yields
\begin{equation*}
\partial F_1(\bar{\xv})=\braces{\vv+\nabla \bar{F}_2(\bar{\xv};\Pcal_o) \mid \vv\in\partial J_{p}(\bar{\xv};\Pcal_o)},
\end{equation*}
which implies that
\begin{equation*}
\braces{\uv-\nabla\bar{F}_2(\bar{\xv};\Pcal_o)\mid \uv\in\partial F_1(\bar{\xv})}=\partial J_{p}(\bar{\xv};\Pcal_o).
\end{equation*}
Substituting the equation above into (\ref{eq:zero_inclusion}) yields that
\begin{equation*}
\mathbf{0}_n\in\partial J_{p}(\bar{\xv};\Pcal_o).
\end{equation*}
Since $J_{p}(\cdot;\Pcal_o)\in\Gamma_0(\Real^n)$, by Fact \ref{fact:sum_of_subdifferential} we have
\begin{equation*}
J_{p}(\bar{\xv};\Pcal_o)=\alpha\coloneqq \min_{\xv\in\Real^n} J_{p}(\xv;\Pcal_o)
\end{equation*}

\noindent
4) Since $(\xv_k)_{k\in\Natural}$ is bounded from 2), there exists a subsequence of $(\xv_{k})_{k\in\Natural}$ converging to some point $\bar{\xv}\in\Real^n$. Hence $J_{p}(\bar{\xv};\Pcal_o)=\alpha$ from the result 3). Fact \ref{fact:properties_of_DCA} then yields
\begin{equation*}
\lim_{k\to +\infty} J_{p}(\xv_k;\Pcal_o)=J_{p}(\bar{\xv};\Pcal_o)=\alpha. \qedhere
\end{equation*}
\end{proof}

\bibliographystyle{IEEEtran}
\bibliography{refs}

\end{document}